\documentclass[a4paper,12pt,margin=0.5in]{article}
\usepackage[a4paper, top=1.6in, bottom=1.6in,left=1in,right=1in]{geometry}
\usepackage[utf8]{inputenc}
\usepackage{amsmath,amsthm,amssymb,array,xcolor,multicol,verbatim,mathpazo}
\usepackage{hyperref}
\usepackage{tikz}
\usepackage{subcaption}
\usepackage{cleveref}
\usepackage{amssymb}
\usetikzlibrary{arrows}\usetikzlibrary{shapes.multipart}

\usepackage{forest}
\usepackage[show]{ed}
\usepackage[
    backend=biber,
    style=authoryear,
    natbib=true,
    url=false, 
    doi=true,
    eprint=false
]{biblatex}
\usetikzlibrary{patterns,positioning}
\usepackage{graphicx}
\usepackage{enumitem}
\usepackage{algorithm}
\usepackage[noend]{algpseudocode}

\usetikzlibrary{calc}

\tikzset{axis line style/.style={thin, gray, -stealth}}
\usepackage{adjustbox}
\usepackage{array}

\usepackage{macros}

\newtheorem{lem}{Lemma}
\newtheorem{thm}{Theorem}

\newtheorem{prop}{Proposition}

\newtheorem*{axiom}{Assumption}
\newtheoremstyle{named}{}{}{\itshape}{}{\bfseries}{.}{.5em}{\thmnote{#3}}

\theoremstyle{definition}
\newtheorem{ex}{Example}

\newtheorem{defn}{Definition}

\makeatother
\newcounter{parentnumber}
\usepackage{amsmath}
\usepackage{newtxtext,newtxmath}
\DeclareSymbolFont{CMletters}{OML}{cmm}{m}{it}
\DeclareMathSymbol{v}{\mathord}{CMletters}{`v}

\newtheorem{cor}{Corollary}
\title{Ripple Effects\\\large Robust Norms without Punishment
}
\date{Updated: \today
}
\author{Alistair Barton\thanks{Cambridge University, Email: \href{mailto:ab3315@cam.ac.uk}{ab3315@cam.ac.uk}\\
I would like to thank Joyee Deb for feedback and guidance on this project, I am grateful to Debraj Ray, Hamid Sabourian, Dilip Abreu, Mikhail Safronov, Konstantinos Ioannidis, and Arjada Bardhi for helpful comments.}}
\begin{document}
\maketitle
\begin{abstract}
I propose a novel, tractable model of pro-social norms in large communities with slightly altruistic agents. Agents participate in the norm to influence others to participate in the norm, influencing further agents. A continuum of equilibria sustain the pro-social norm, varying in the distribution of how much agents are influenced by their observations. If agents' effective altruism $\alpha$ is larger than their impatience $1-\delta$, equilibria exist that are robust to a population of bad actors. Greater strategic homogeneity increases the robustness of the norm. Robustness is not improved by increasing the number of observers of each action beyond 1.    
\end{abstract}

\begin{quote}
    ``[E]ach of us has the plague within him; no one, no one on earth, is free from it. And I know, too, that we must keep endless watch on ourselves lest in a careless moment we breathe in somebody's face and fasten the infection on him. What's natural is the microbe. All the rest [...] is a product of the human will, of a vigilance that must never falter.''\footnote{In this quote, the character Jean Tarrou is using `the plague' as a metaphor for his experience with vindictive politics and capital punishment.}\hfill Albert Camus, \textit{The Plague} (\citeyear{camus})
\end{quote}

Generally speaking, would you say that most people can be trusted or that you can't be too careful in dealing with others? Your answer to this question determines if you have \textit{generalized trust}, a major element of your community's social capital, and associated with economic growth (\cite{AC10}), political regulation (\cite{AAC10}), financial development (\cite{GSZ04,GSZ08}), and health and happiness (\cite{CB14}).

But trust cannot exist if it is constantly betrayed: it must be supported by a common norm of moral and trustworthy behaviour. Economic theory often describes these norms as motivated by the threat of material punishment (either by direct enforcement or mediated by reputational damage, in repeated games), but moral behaviour often occurs in the absence of any plausible punishment. Such material motivations are an unconvincing explanation for why people generally: return lost wallets (\cite{CMTZ19,RBA26}), punish antisocial behaviour (\cite{FG01}), tip at restaurants (\cite{A20}), refrain from profitable lies (\cite{G05}), or even make an effort to be considerate, helpful, and polite to total strangers.

In this paper, I propose that even slight social preferences can support pro-social norms through a mechanism of social influence. An agent who deviates from the norm breaks social trust, and may influence others to similarly deviate from the norm in the future, thereby influencing how yet more others will act, propagating the original agent's influence in a `ripple'. Agents value this influence, not because they expect it to affect how they are treated, as in ideas of karma and indirect reciprocity, but because they care (however slightly) about the behaviour of strangers towards \textit{others}.

This social preference can be thought of as altruism: there is some non-zero cost $a$ agents are willing to pay for another agent to obtain one unit of welfare.\footnote{$a=0$ describes an agent unwilling to even lift a finger to benefit another, and may constitute a marker of psychiatric disorders (\cite[p. 748]{psychopath}).} Even if $a$ is small, agents can be incentivized to participate in the norm if their influence is sufficiently large. And for a sufficiently patient agent, it is possible to have a significant `ripple' of influence that still dissipates over time, allowing the norm to be \textit{robust} to a small measure of persistent deviators from the norm. This robustness property is not obtained in prior anonymous models of social norms with large populations (discussed below).

A norm in this paper is an equilibrium where the vast majority of players (`all but $\epsilon$') choose the same action. The model is constructed (and equilibria selected) with a few desirable properties in mind:
\begin{enumerate}
    \item \textit{Anonymity}: agents cannot be targeted for punishment based on their past actions.
    \item \textit{Pro-sociality}: the norm is socially efficient, but participating is against myopic incentives.
    \item \textit{Robustness}: the norm persists if a small measure of agents consistently defy the norm.
    \item \textit{Sustainability}: aggregate behaviour does not change over time.
\end{enumerate}
\citet{SW23} show that symmetric norms satisfying properties 1-3 cannot exist in large self-interested populations under weak asymptotic assumptions. The main contribution of this paper is a highly tractable model of norms capable of satisfying all four properties through the introduction of social preferences.

For an agent to be influenced by others' actions, deviating from the norm must be a best response after observing deviations and contributing to the norm must be the best response after observing contributions. Thus agents' influence must precise compensate for their myopic unwillingness to participate in the norm. This characterizes a continuum of equilibria.

Considering robustness introduces a trade-off: increasing agents' influence increases the incentive to participate in the norm, but increasing the influence of deviations is also to the detriment of robustness. We develop a notion of comparing the robustness of equilibria, by measuring the amount they amplify the deviations of a small, consistently deviating population.

There are two main results: first, slight altruism is sufficient to sustain robust norms as long as agents are sufficiently patient --- it suffices that $\alpha+\delta> 1$, where $\delta$ is the discount factor and $\alpha$ is the myopic willingness to contribute to the norm (associated with altruism $a$), when actions are observed by at least one other agent on average\footnote{This is generally a mild condition, assured if (a) actions are certain to impact someone, and (b) people observe when they are impacted by another's choice.}. Secondly, more robust equilibria are associated more \textit{homogeneous} strategies across the population (under a certain natural ordering of pure strategies).

These results suggest that patience and strategic homogeneity are substitutes for altruism in sustaining robust norms: little altruism is needed if both patience and strategic homogeneity are high, but higher altruism can compensate for a loss in either (this interpretation is informal, as homogeneity is an equilibrium property).

In Section \ref{sec:pub}, I consider how increasing the publicity of actions affects the strength of norms in this model. Contrary to how we might think of norms sustained through punishment, increasing publicity does not necessarily increase the robustness of norms: maximum robustness is attainable when actions are observed by less than one other agent on average.


Throughout most of this paper, we restrict attention to a simple class of Finite Deviation strategies. In Section \ref{sec:strat.gen} we show that very little is lost through this restriction. This is argued through two propositions, the first shows that equilibria within this class are representative of the much broader, natural class of equilibria where agents are more influenced by recent observations than later observations; the second shows that Finite Deviation equilibria form the existence--robustness frontier.


In Section \ref{sec:conc} we discuss interpretations and extensions of the model, as well as relations to ethics and Kantian equilibria.



\subsection*{Related Literature}

This work is strongly influenced by community enforcement models (\cite{K92,E94,H95}) that study repeated prisoner's dilemmas with random matching in anonymous and selfish societies. In these equilibria, deviating is contagious, a single deviation creates an epidemic that soon collapses the norm, incentivizing selfish agents to participate in the norm for their future self's sake. 

However these equilibria are not robust to a population of consistent deviators.\footnote{\citet{E94} discusses how the equilibria can be made robust to trembling hands, by randomly `restarting' the game. This is distinct from robustness to consistent deviators who are constantly collapsing the norm. \citeauthor{E94} observes
\begin{quote}
    ``If one player were `crazy' and always played $D$ (or was simply was unaware which equilibrium was being played) [...] contagion strategies would not support cooperation. In large populations, the assumption that all players are rational and know their opponents' strategies may be both very important to the conclusions and fairly implausible.''
\end{quote}} As previously mentioned \citet{SW23} show that (under certain assumptions that are weak without social preferences) the robust norms in large anonymous societies must be myopic. The introduction of slight altruism escapes this result, by effectively creating an externality that scales with the population.

Our relaxation of selfishness is justified by the observations of \citet{Andreoni.paradox,A90} that people's revealed preference for altruism is not crowded out by being part of a large society with other contributing agents.\footnote{That such social considerations feature in agents' preferences was originally proposed by \citet{Edgeworth} under the name \textit{effective sympathy}.} The proposed model of `warm glow' altruism is one possible foundation for the preference we use (derived in Section \ref{sec:altruism.FAQ}). 
\citet{EGK07} exploit a similar scalability property of social preferences as an explanation for high election turnout, despite the improbability of an individual voter being pivotal. 

A broad literature studies norms in repeated games without anonymity, considering additional information structures capable of sustaining norms. A sample of such papers include \citet{OFP95,T10,HM18,BT18,SW21}. \citet{T10} notably develops a notion of \textit{independent and indifferent equilibria}, a property also satisfied by our Finite Deviation equilibria, and related to the \textit{belief free equilibria} of \citet{EHO05}. In Section \ref{sec:mech.ag}, we provide a `mechanism agnostic' interpretation of our equilibria, consistent with \citeauthor{B26}'s (\citeyear{B26}) discussion of these techniques.

Our conception of cascading influence is empirically rooted in the social contagion literature. \citet{FC10,bond2012experiment,FWE15,SMSW23} are a sample that studies contagions of specifically pro-/anti-social behaviour. On an individual level, the contagious nature of pro-social behaviour is also consistent with literatures on pro-social conformity and conditional cooperation (\cite{Frey04,B05,SC09,nook2016}).

\section{Model}

We consider a society with a unit-measure continuum of agents $\mathbb{I}$ making decisions in integer time $\mathbb{N}$ and discounting the future at rate $\delta$. Each period agents decide whether to contribute $C$, or deviate $D$; the set of actions is $A:=\{C,D\}$. Contributing has net cost $c$ to the individual, but produces $c+g$ units of public welfare, for a net gain of $g$ units to public welfare. Agents are slightly altruistic, willing to pay $a$ per marginal unit increase to public welfare. We normalize $c=1$.

After each period, every agent's action is observed by a number $\beta$ of anonymous agents (drawn from an atomless distribution on $\mathbb{I}$), and every agent observes the actions of $\beta$ other agents. For non-integer $\beta$, the number of observed agents is drawn from a distribution with mean $\beta$ and bounded support, independently for each agent.



\subsubsection*{Preferences}\label{sec:modelpref}
An agent's belief about the outcome of the game can be represented by a vector $p\in[0,1]^{\mathbb{I}\times\mathbb{N}}$ where $p_{j,\tau}$ be the probability agent $j$ contributes in period $\tau$. Suppose $p^C$ is the path if agent $i$ contributes in period $t$ while $p^D$ is the path if they deviate. They prefer to contribute if
    \begin{equation}\label{eq:pref}
        \sum_{\tau\ge t} \delta^{\tau-t} \Big(-(p^C_{i,\tau}-p^D_{i,\tau})+\alpha\sum_{j\in\mathbb{I}}(p^C_{j,\tau}-p^D_{j,\tau})\Big)\ge 0,
    \end{equation}
where $\alpha:=\frac{a}{1-a}g$ is the effective altruism of agents associated with the action $C$. Eq. \ref{eq:pref} describes an agent who places an equal value of $\alpha$ on any agent's contribution, and is willing to privately contribute if their contribution is subsidized at a rate $\overline{\delta}_1:=1-\alpha$ (which we call the \textbf{altruism discount}, to contrast with selfish agents who would only contribute if subsidized the full cost $1$). We assume $0<\alpha<1$.


We refer to the second term as the (altruistic) \textbf{externality} associated with contributing/deviating. It describes how an agent weighs the public welfare difference between the two paths. Its summation has an uncountable index --- mathematically, the sum diverges unless the summand is zero on a co-countable subset of $\mathbb{I}$. This does not pose a problem as agents are only capable of influencing a countable number of agents in our society --- for our purposes, the sum could be restricted to the countable set of actions that agent $i$ can influence.\footnote{For similar reasons, the incompleteness of the preference when the sum diverges will not be relevant to our analysis. Indeed, the incompleteness is a positive feature as we avoid taking a stance on how agents resolve moral paradoxes.

 The key role of the continuum population assumption that introduces this complexity is to ensure that an agent's observations are independent of the strategies of their observers, a property we refer to as \textit{observation independence} and discuss in more detail in Section \ref{sec:cont.pop}.}

This preference, and variations upon it, are further discussed in Section \ref{sec:altruism.FAQ}.

\subsubsection*{Strategies} 
To describe the sustainability of equilibria, we view strategies as automata (first developed by \cite{R86}). Recall that an agent's observation consists of a number of contributions and a number of deviations, which can be described as an element $(o_C,o_D)\in\mathscr{O}:=\mathbb{N}_0^2$ given by the number of contributions $o_C$ and deviations $o_D$ they observe. 

A strategy space is a tuple $\langle \Sigma,\sigma^+,\hat{a} \rangle$ composed of a (possibly infinite) measurable space $\Sigma$; a transition function $\sigma^+: A\times\mathscr{O}\times\Sigma\rightarrow \Sigma$ dictating the state that the agent updates to for the next period based on their action, their observations, and their current state; and an action choice for every state $\hat{a}:\Sigma\rightarrow  A$.

A pure strategy is an initial state in this network  $\sigma^0\in\Sigma$,\footnote{\citet{R86} uses the tuple $\langle \Sigma,\sigma^+,\hat{a},\sigma^0 \rangle$ to describe a strategy. We fix the universe of pure strategies $\langle \Sigma,\sigma^+,\hat{a} \rangle$, enabling the shorthand of describing a strategy by an initial state in this universe.} with future actions decided by how the agent's state is updated by $\sigma^+$. A \textbf{population strategy} is a measure $\mu\in\Delta\Sigma$.

\begin{figure}
    \centering
    \begin{subfigure}{0.24\textwidth}\centering
    \begin{tikzpicture}[->, >= stealth', shorten >=2pt, line width=0.5pt, node distance=2cm, inner sep=0pt, minimum size=8mm,
    C/.style={circle , draw, fill=blue, text =white},
    D/.style={circle , draw, fill=red,text =white},
    Carrow/.style={blue},
    Darrow/.style={red}]
        \node [C] (c) at (0,0) {\small 0};
        \path[Carrow] (c) edge [ loop above ] (c) ;
        \path[Darrow] (c) edge [ loop right ] (c) ;
    \end{tikzpicture}
    \end{subfigure}
    \begin{subfigure}{0.24\textwidth}\centering
    \begin{tikzpicture}[->, >= stealth', shorten >=2pt, line width=0.5pt, node distance=2cm, inner sep=0pt, minimum size=8mm,
    C/.style={circle , draw, fill=blue, text =white},
    D/.style={circle , draw, fill=red,text =white},
    Carrow/.style={blue},
    Darrow/.style={red}]
        \node [C] (c) at (0,0) {\small 1};
        \node [D] (d) at (1.5,0) {};
        \path [Carrow] (c) edge [ loop above ] (c);
        \path [Darrow] (d) edge [ loop above ] (d);
        \path [Darrow] (c) edge [ bend right ] (d);
        \path [Carrow] (d) edge [ bend right ] (c);
    \end{tikzpicture}
    \end{subfigure}
    \begin{subfigure}{0.24\textwidth}\centering
    \begin{tikzpicture}[->, >= stealth', shorten >=2pt, line width=0.5pt, node distance=2cm, inner sep=0pt, minimum size=8mm,
    C/.style={circle , draw, fill=blue, text =white},
    D/.style={circle , draw, fill=red,text =white},
    Carrow/.style={blue},
    Darrow/.style={red}]
        \node [C] (c) at (0,0) {\small 2};
        \node [D] (d1) at (1.5,0) {};
        \node [D] (d2) at (0.75,1.3) {};
        \path [Carrow] (c) edge [ loop left ] (c);
        \path [Darrow] (d1) edge [ loop right ] (d1);
        \path [Darrow] (c) edge [ bend right ] (d1);
        \path [Carrow] (d1) edge [ bend right ] (d2);
        \path [Carrow] (d2) edge [ bend right ] (c);
        \path [Darrow] (d2) edge [ bend right ] (d1);
    \end{tikzpicture}
    \end{subfigure}
    \begin{subfigure}{0.24\textwidth}\centering
    \begin{tikzpicture}[->, >= stealth', shorten >=2pt, line width=0.5pt, node distance=2cm, inner sep=0pt, minimum size=8mm,
    C/.style={circle , draw, fill=blue, text =white},
    D/.style={circle , draw, fill=red,text =white},
    Carrow/.style={blue},
    Darrow/.style={red}]
        \node [D] (d) at (1.5,0) {{\small AD}};
        \path[Carrow] (d) edge [ loop above ] (d) ;
        \path[Darrow] (d) edge [ loop right ] (d) ;
        \node [C] (c) at (0,0) {\small$\infty$ };
        \path[Carrow] (c) edge [ loop above ] (c) ;
        \path[Darrow] (c) edge [ bend right ] (d) ;
    \end{tikzpicture}
    \end{subfigure}
    \caption{A possible strategy space $\langle \Sigma,\sigma^+,a\rangle$ --- nodes are states, coloured according to $a$ (blue indicates $a(\sigma)=C$, while red is $a(\sigma)=D$). Arrows describe $\sigma^+$, blue indicates the transition after observing no deviations, red is after observing one or more deviations. $0,1,2$ denote the respective Finite Deviation contributing states, linked to their associated (unlabelled) deviating states, $0$ is also known as Always Contribute, $\infty$ is Grim Trigger, while AD is Always Deviate.}
    \label{fig:stratex}
\end{figure}

One strategy space is illustrated in Figure \ref{fig:stratex}. The Always Deviate strategy is a state $\sigma_\text{AD}$ that deviates and remains in this state no matter what. The Grim Trigger strategy is identified with a state $\sigma_\infty$ that contributes, and remains in this state until observing a deviation where it switches to $\sigma_\text{AD}$.\footnote{This is subtly different from the typical Grim Trigger strategy which also prescribes that the agent deviate if \emph{they} have previously deviated, even if they have not observed anyone else deviate. This distinction does not have material effects on our results, but is merely for consistency with $n$-Deviation strategies.} The $n$-Deviation strategy (denoted $n$ or $\sigma_n$) contributes iff no deviation has been observed in the previous $n$ periods, and is associated with $n$ deviating states that are incrementally cycled through (if no deviations are observed) until arriving at the contributing state. 

For notational convenience, we denote Grim Trigger as $\infty$ or $\sigma_\infty$, and Always Contribute as $0$ or $\sigma_0$. We will assume that the strategy space includes the \textbf{Finite Deviation strategy space}, consisting of the \textbf{Finite Deviation strategies} --- the set of $n$-Deviation strategies represented by $\mathbb{N}_0$ --- as well as the deviating states they connect to, and Always Deviate. In Section \ref{sec:strat.gen} we show that restricting attention to Finite Deviation strategies is with little loss of generality.

We are interested in sustainable norms, consisting of a population strategy that does not evolve over time:
\begin{defn}
A population strategy $\mu$ is \textbf{stationary} if its distribution over states is constant over time when interacting with itself. Formally, for all measurable $\Sigma'\subseteq \Sigma$
\begin{equation}\label{eq:stat.strat}
\mu[\Sigma']= \sum_{o_C\ge 0}\mu\Big\{\sigma\in\Sigma;\sigma^+\big(a(\sigma),(o_C,\beta-o_C)\rvert\sigma\big)\in\Sigma'\Big\}\P{o_C}
\end{equation}
where the number of contributions observed $o_C$ is distributed according to the binomial distribution $o_C\sim B\big(\beta,\mu[\hat{a}^{-1}(C)]\big)$.\footnote{For non-integer $\beta$, take the expection of eq. \ref{eq:stat.strat} over the number of total observations.}
\end{defn} 
A pure strategy is stationary iff the initial state is unchanged when interacting with itself. All Finite-Deviation strategies are stationary.

A population strategy is stationary if individuals transition between states in ways that keep the aggregate distribution constant --- for example, if $\beta=1$, then any distribution that puts positive weight on both the 1-Deviation strategy and its associated deviating state is stationary: the measure of contributors in one period is then the probability of observing a contribution  the previous period, equal to the measure of contributors that period. However, if we add some positive weight to the Always Deviate state, the resulting population strategy is no longer stationary, as now the measure of agents contributing will always be smaller than the measure of agents that observed a contribution.

Our solution concept is the following:
\begin{defn}
    A \textbf{perfect stationary equilibrium} (\textbf{PSE}) is a stationary strategy that is a best response to itself after any sequence of observations, according to the preference given by eq. \ref{eq:pref}.
\end{defn}

\section{Full Participation Equilibria}\label{sec:exist}
Our first goal is to characterize PSEs where agents contribute on path, ie. $\mu[\hat{a}^{-1}(C)]=1$, which we will call \textbf{full-participation} equilibria. We begin with two examples, demonstrating our technique, before moving to the more general characterization.

\subsection{Example Equilibria}
\paragraph{One-Deviation Equilibrium} Suppose agents randomize between the 1-Deviation strategy (w.p. $\mu_1$) and Always Contribute (w.p. $1-\mu_1$). Trivially this is a stationary strategy. The expected externality incurred by deviating can be calculated recursively, as a single deviation directly results in a loss of $\alpha$, and further leads to $\beta\mu_1$ other agents deviating in the next period (on average).  Thus the expected externality $\Delta$ created by contributing rather than deviating solves
\begin{equation}\label{eq:1TFT.Delta}
    \Delta=\alpha+\delta\beta\mu_1 \Delta 
\end{equation}
For this population strategy, each deviation directly creates $\beta\mu_1$ other deviations --- this is the \textbf{reproduction rate} $R(\mu)$ of deviation in this strategy. Since agents discount the future, they care about the \textbf{discounted reproduction rate} $D(\mu)=\delta\beta\mu_1$ of deviation. These are two measures of agents' equilibrium influence.

For this strategy to be a PSE, agents who observed no deviation in the previous period must be willing to contribute; while those who observed a deviation and are in the 1-Deviation state must be willing to deviate. Since their preference in both scenarios is the same, they must be indifferent. This occurs when the externality $\Delta$ created by contributing is exactly equal to the private cost of contributing. Rearranging,
\begin{equation}
    \label{eq:DDR}
    D(\mu)=1-\alpha=:\overline{\delta}_1.
\end{equation}
The right side is the excess cost of contributing beyond the direct externality it produces, which must then be offset by an agent's influence $D(\mu)$ to create indifference. The lower the altruism discount $\overline{\delta}_1$, the less influence is necessary in equilibrium.

If we tune $\mu_1$ to get equality in eq. \ref{eq:DDR}, we obtain a PSE. This is possible whenever $\delta\ge \overline{\delta}_1/\beta$ (and certainly when $\delta\ge 1/\beta$). Such an equilibrium involves $\mu_1=\overline{\delta}_1/\delta\beta$ agents in the 1-Deviation state and the rest always contributing. 

Note that this equilibrium may exist even if the direct externality $\alpha$ is very small, as long as $1-\delta\beta$ is smaller.

\paragraph{Grim Trigger Equilibrium} Suppose agents randomize between Grim Trigger (w.p. $\mu_\infty$) and Always Contribute (w.p. $1-\mu_\infty$). Trivially this is a stationary strategy. Now a single deviation leads to the possibility of the opponent deviating forever after, leading to the recursive equation for the externality
\begin{equation}\label{eq:infTFT}
\Delta=\alpha+\beta\mu_\infty\sum_{n=1}^\infty \delta^n \Delta=\alpha+\beta\mu_\infty\frac{\delta}{1-\delta}\Delta
\end{equation}
The reproduction rate of deviation in this equilibrium is infinite, as a single deviation expects to directly induce infinitely more\footnote{Note that the reproduction rate is not the rate at which deviation increases over time --- but the rate at which deviation grows as the degree of separation from the original deviator increases.}; however, by comparison with eq. \ref{eq:1TFT.Delta}, the discounted reproduction rate is $D(\mu)=\beta\mu_\infty\frac{\delta}{1-\delta}$. Note that if $D(\mu)\ge 1$, the recursive equation breaks down and the externality $\Delta$ will be infinite.

By tuning $\mu_\infty$ to solve $D(\mu)=\overline{\delta}_1$ we obtain an equilibrium whenever 
$$
\delta\ge \tfrac{\overline{\delta}_1}{\overline{\delta}_1+\beta}=:\overline{\delta}_\infty.
$$
Note that this lower bound $\overline{\delta}_\infty$ is at most $\tfrac{1}{1+\beta}$. This is because the pure Grim Trigger strategy increases the number of deviations by a factor of $1+\beta$ every period, thus the total externality can diverge to infinity as long as $\delta(1+\beta)\ge 1$. While grim trigger can impose strong incentives, this equilibrium is intuitively fragile in a way that will be formalized in Section \ref{sec:robust}.\footnote{\label{fn:interpret}The reader may be skeptical if fragile equilibria have \textit{any} validity in finite population settings. Indeed observation independence fails severely in finite populations where many agents play Grim Trigger --- seeing a deviation now means many agents will be deviating later. However, if the society is very large, these deviations will take a long time to spread, and the time at which agents expect to see meaningful deviation rates may be sufficiently distant to be of little strategic relevance. As such, observation independence may still approximately hold over the time-horizon that agents deem relevant.} 

While we have only explored the discounted reproduction rate for a couple of examples, our analysis generalizes:
\begin{thm}\label{thm:GT}
A stationary population strategy $\mu$ is a PSE iff it solves eq. \ref{eq:DDR}. Full-participation PSE exist if $\delta\ge\frac{1}{1+\beta}$.
\end{thm}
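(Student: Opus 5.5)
The plan is to reduce every incentive constraint to a single scalar, the externality $\Delta$ of contributing rather than deviating. I would then show that this scalar is the same at every history, so that all the equilibrium conditions collapse to one equation. I read the characterisation as concerning full-participation stationary strategies supported on the Finite Deviation strategy space, which is the setting of the examples. For such $\mu$, with weight $\mu_n$ on $\sigma_n$ for $n\in\mathbb{N}_0\cup\{\infty\}$, I define the discounted reproduction rate
\[
D(\mu):=\beta\sum_{n}\mu_n\sum_{k=1}^{n}\delta^k ,
\]
which equals $\delta\beta\mu_1$ and $\beta\mu_\infty\tfrac{\delta}{1-\delta}$ in the two examples.

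\textbf{Step 1 (history independence).} Fix an agent $i$ at any state and history, and compare the paths $p^C$ and $p^D$ in eq. \ref{eq:pref}.
\begin{itemize}
\item \emph{Own term.} In the Finite Deviation space, $\sigma^+$ depends on the agent's observations and not on their own action. Hence $i$'s future play is identical on both paths, and the private term is exactly $-1$.
\item \emph{Externality term.} On path everyone contributes. A deviation by $i$ is seen by $\beta$ observers in expectation, drawn from an atomless distribution. By observation independence, almost surely each observer currently sees no other deviation, and the chains of influence generated by distinct deviations are almost surely disjoint.
\item An observer playing $\sigma_n$ therefore deviates for exactly the next $n$ periods and not otherwise. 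Each of those induced deviations in turn generates an independent copy of the same ripple.
\end{itemize}
This gives the recursion $\Delta=\alpha+D(\mu)\Delta$, so $\Delta=\alpha/(1-D(\mu))$ when $D(\mu)<1$, and $\Delta=\infty$ otherwise. The key point is that $\Delta$ does not depend on $i$'s state or past observations.

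\textbf{Step 2 (characterisation).} I would first rule out the trivial case. If $\mu$ puts all weight on Always Contribute, then $D=0$ and $\Delta=\alpha<1$, so contributing is not a best response and $\mu$ is not a PSE. Otherwise some $n\ge1$ has $\mu_n>0$, and both kinds of state are reached with positive probability:
\begin{itemize}
\item contributing states require $\Delta\ge1$;
\item deviating states, reached after observing a deviation, require $\Delta\le1$.
\end{itemize}
Hence $\mu$ is a PSE iff $\Delta=1$, which rearranges to $D(\mu)=1-\alpha=\overline{\delta}_1$, i.e. eq. \ref{eq:DDR}. Conversely, if eq. \ref{eq:DDR} holds, every agent is indifferent at every history, so every prescribed action is a best response. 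Stationarity holds trivially, since Finite Deviation strategies are stationary.

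\textbf{Step 3 (existence).} I would use the Grim Trigger family from the example, mixing $\sigma_\infty$ with probability $\mu_\infty$ and Always Contribute otherwise. Then $D$ is continuous and increasing in $\mu_\infty$, runs from $0$ up to $\beta\delta/(1-\delta)$, and the latter is at least $1>\overline{\delta}_1$ whenever $\delta\ge 1/(1+\beta)$. The intermediate value theorem then gives a $\mu_\infty$ with $D(\mu)=\overline{\delta}_1$.

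The main obstacle is Step 1: making rigorous that ripples never interact, so that $\Delta$ is truly linear and history-independent. This is where the continuum and observation independence are used, showing that overlapping influence events have measure zero. A secondary point is that the recursion must be justified by summing over generations of influence, since $D<1$ is needed for the series to converge.
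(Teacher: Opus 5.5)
Your proposal is correct and follows essentially the same route as the paper. The paper derives the recursion $\Delta=\alpha+D(\mu)\Delta$ in the 1-Deviation and Grim Trigger examples, uses the fact that both contributing and deviating states must be best responses to force indifference $\Delta=1$ (i.e.\ $D(\mu)=\overline{\delta}_1$), and gets existence by tuning the Grim Trigger weight $\mu_\infty$ against Always Contribute; your $\beta\delta/(1-\delta)\ge 1>\overline{\delta}_1$ bound is a slightly cruder version of its $\overline{\delta}_\infty\le\frac{1}{1+\beta}$ observation.

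Your restriction to full-participation strategies supported on $\{0,1,\dots,\infty\}$ matches what the paper actually establishes at that point. For general full-participation strategies, one would replace $\mu$ by the impulse response $\nu(\mu)$ in $D$.
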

This highlights a stark discontinuity at $\alpha= 0$ where Always Deviate is the unique PSE.

\subsection{General Finite-Deviation Equilibria} 
We represent a Finite Deviation strategy $\mu$ as a distribution over $\mathbb{N}_0$, where $\mu_n$ is the probability of being in the $n$-Deviation state (defined to be contributing). Such a distribution is stationary as previously observed.

A pure $n$-Deviation strategy has discounted reproduction rate
$$
D(n)=\beta(\delta+\cdots +\delta^n)=\beta\tfrac{\delta}{1-\delta}(1-\delta^n).
$$
A population Finite-Deviation strategy $\mu\in \Delta\mathbb{N}_0$ has discounted reproduction rate
\begin{equation}\label{eq:genexist}
    D(\mu)
    =\beta\tfrac{\delta}{1-\delta}\E[n\sim\mu]{1-\delta^n}.
\end{equation}
When $\delta>\overline{\delta}_\infty$ there are a continuum of Finite Deviation equilibria $\mu$ solving $D(\mu)=\overline{\delta}_1$. We parametrize this equilibrium space (hyperplane) by the \textbf{reactivity distribution} $\mu^+\in\Delta\{1,2,\dots\}$, the probability that an agent is in the $n$-Deviation state conditional on not being in the 0-Deviation state (ie. the describes the population strategy of agents who react to a deviation). 

Appropriately randomizing between 0-Deviation and $\mu^+$ then obtains an equilibrium when
\begin{equation}\label{eq:overest}
\overline{\delta}_1\le D(\mu^+)=\tfrac{\beta\delta}{1-\delta}\E[n\sim\mu^+]{1-\delta^n}.
\end{equation}
Note $n\mapsto 1-\delta^n$ is increasing and concave. Recall that a distribution $\mu$ is larger than another $\mu'$ according to second-order stochastic dominance (\textbf{SOSD}) if $\E[n\sim\mu]{f(n)}\ge \E[n\sim \mu']{f(n)}$ for any increasing, concave function $f$; $\mu$ is larger according to first-order stochastic dominance (\textbf{FOSD}) if $\E[n\sim\mu]{f(n)}\ge \E[n\sim \mu']{f(n)}$ for any increasing function $f$. This obtains the following proposition:
\begin{prop}\label{prop:exist}
If an equilibrium exists for a reactivity distribution $\mu^+$, then it exists for any SOSD-larger reactivity distribution $\hat{\mu}^+$. 

No two Finite Deviation PSEs are SOSD-comparable in $\Delta\mathbb{N}_0$ (\textit{a fortiori}, they are not FOSD-comparable).
\end{prop}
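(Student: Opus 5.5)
Both parts rest on the formula in eq.~\ref{eq:genexist}: $D(\mu)=\tfrac{\beta\delta}{1-\delta}\E[n\sim\mu]{f(n)}$ with $f(n):=1-\delta^n$. This is an expectation of $f$, and $f$ is increasing, strictly concave, and satisfies $f(0)=0$. By Theorem~\ref{thm:GT}, a Finite Deviation population strategy is a PSE exactly when $D(\mu)=\overline{\delta}_1$.

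For the first part, I start from a reactivity distribution $\mu^+\in\Delta\{1,2,\dots\}$ for which an equilibrium exists. The equilibrium mixes $\sigma_0$ with weight $1-\lambda$ and $\mu^+$ with weight $\lambda\in[0,1]$. Since $f(0)=0$, its discounted reproduction rate is $\lambda D(\mu^+)$, so equality $\lambda D(\mu^+)=\overline{\delta}_1$ with some $\lambda\le 1$ is possible iff $D(\mu^+)\ge\overline{\delta}_1$; this is condition eq.~\ref{eq:overest}. If $\hat{\mu}^+$ SOSD-dominates $\mu^+$, then, because $f$ is increasing and concave, $\E[\hat{\mu}^+]{f}\ge\E[\mu^+]{f}$, hence $D(\hat{\mu}^+)\ge D(\mu^+)\ge\overline{\delta}_1$. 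Taking $\hat\lambda=\overline{\delta}_1/D(\hat{\mu}^+)\in(0,1]$ then gives an equilibrium. Stationarity is automatic for Finite Deviation mixtures, as noted earlier.

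For the second part, suppose $\mu\ne\mu'$ are both Finite Deviation PSEs and $\mu$ SOSD-dominates $\mu'$. Then $\E[\mu]{f}\ge\E[\mu']{f}$, and both equal $\overline{\delta}_1(1-\delta)/(\beta\delta)$. I want to show that this equality contradicts $\mu\ne\mu'$, using strict monotonicity and strict concavity of $f$. The cleanest route is the integrated-CDF characterization on $\mathbb{N}_0$. SOSD dominance means $S(k):=\sum_{j<k}\big(F_{\mu'}(j)-F_\mu(j)\big)\ge 0$ for all $k$. Summation by parts writes
\[
\E[\mu]{f}-\E[\mu']{f}=\sum_{k\ge1}\big(f(k)-f(k-1)\big)\big(F_{\mu'}(k-1)-F_{\mu}(k-1)\big),
\]
and a second summation by parts rewrites this as a combination of the $S(k)$ with strictly positive weights. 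Those weights are the decreasing increments $f(k)-f(k-1)=\delta^{k-1}(1-\delta)$ and their strictly positive differences, plus a boundary term $\lim_k \delta^{k}(1-\delta)S(k)\ge0$.

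The step I expect to be the main obstacle is this double summation by parts with possibly unbounded support, where convergence and the tail term need care. The tail can be controlled because $\delta^k\to0$ and $S(k)$ grows at most linearly when means are finite; if a mean is infinite, I will truncate and pass to the limit. Once the identity is established, equality of expectations forces every $S(k)=0$. This gives $F_\mu=F_{\mu'}$, so $\mu=\mu'$, a contradiction. FOSD dominance implies SOSD dominance, which gives the \emph{a fortiori} clause.
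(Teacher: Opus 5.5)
Your proposal is correct and follows the paper's own argument. The first part is exactly the paper's `overestimation' via eq.~\ref{eq:overest} together with the concavity of $n\mapsto 1-\delta^n$. The second part is the paper's appeal to the characterization $D(\mu)=\overline{\delta}_1$, and your Abel-summation argument supplies the strictness (strict monotonicity and strict concavity of $1-\delta^n$) that the paper leaves implicit.

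One small simplification: the boundary term needs neither a finite-mean assumption nor truncation. Since $|F_{\mu'}(j)-F_\mu(j)|\le 1$ gives $|S(k)|\le k$, you always have $\delta^k S(k)\to 0$.
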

The latter statement is a consequence of the equilibrium characterization eq. \ref{eq:DDR}. The former statement follows from the `overestimation' of eq. \ref{eq:overest}, and implies that whenever a 1-Deviation equilibrium exists (ie. $\delta\ge \overline{\delta}_1/\beta$), there exists an equilibrium for \textit{any} reactivity distribution $\mu^+$.

This result is intimately connected with axiomatic properties of agents' preferences. That $D(\cdot)$ is the expectation of an increasing concave function follow from agents being present-biased: the difference between the $(n-1)$-Deviation strategy and the $n$-Deviation strategy is a deviation $n$-periods in the future. As $n$ increases the weight placed on this marginal deviation decreases while remaining positive. This connection is formally established in the Online Appendix.

A mean-preserving contraction can also be thought of as moving deviations from the end of the deviation phase of high-$n$ strategies in $\mu$ to the end of the deviation phase of low-$n$ strategies, effectively moving the deviations earlier in time thus increasing. 

\section{Behavioural Robustness \& Stability}\label{sec:robust}

One reason the Grim Trigger equilibrium is unappealing is that the stationarity of the equilibrium critically relies on there being no deviators. As soon as there is a positive measure of deviators, agents will start transitioning out of the Grim Trigger state, never to return.

We formalize this concern through the two notions of stability and robustness. Define $\lambda_t$ to be the probability with which deviations are observed in period $t$. In a PSE $\lambda_t\equiv\mu[a^{-1}(D)]$.
\begin{defn}[Dynamic Stability]
    A stationary strategy $\mu$ with deviation rate $\lambda_\ast$ is \textbf{locally stable} if, for sufficiently small $\epsilon$, after any finite history with deviation rates $\lambda_{-t}$\footnote{We use negative indices to highlight the exogenous nature of these deviation rates, distinct from the strategy-induced deviations that occur in `positive' time.} satisfying $|\lambda_{-t}-\lambda_\ast|<\epsilon$, the population strategy will converge back to $\mu$ with the deviation rate in any period satisfying $|\lambda_t-\lambda_\ast|<\epsilon$.

    It is \textbf{(globally) stable} if this holds for any $\epsilon>0$. 
\end{defn}
Fluctuations in the deviation rate will influence more agents to deviate, generally increasing the deviation rate. Stability just says that the strategy will course-correct over time without the deviation rate going further away from $\lambda_\ast$.\footnote{\citet{K92,E94} use a weaker notion of \textit{global stability}, allowing for the deviation rate to temporarily explode before the population strategy eventually returns to $\mu$ (in \citeauthor{E94}'s case through a coordination device). Their community enforcement equilibria do not satisfy this stronger property.}

Stability is a purely dynamic property --- it assumes that agents do not adjust their strategy to these fluctuations which may affect their best response. The following notion of robustness is strategic, saying that agents should be able to adjust their strategy to a small subpopulation of always deviators, without significantly disrupting the equilibrium:
\begin{defn}[Behavioural Robustness]
    A full-participation PSE $\mu$ is \textbf{(behaviourally) robust} if there exists a sequence of PSE $\tilde{\mu}$ where agents Always Deviate with positive probability ($\tilde{\mu}[\sigma_\text{AD}]>0$) with $\tilde{\mu}\rightarrow \mu$ (under the weak topology).

    For two behaviourally robust equilibria $\mu,\mu'$, we say $\mu$ is \textbf{more robust} than $\mu'$ if 
\begin{equation}\label{eq:rob}
    \min_{\tilde{\mu}\rightarrow \mu}\lim \frac{\lambda_{\tilde{\mu}}}{\tilde{\mu}[\sigma_\text{AD}]}<\min_{\tilde{\mu}'\rightarrow \mu'}\lim\frac{\lambda_{\tilde{\mu}'}}{\tilde{\mu}'[\sigma_\text{AD}]},
\end{equation}
where the minimum is over approximating sequences of PSEs $\tilde\mu\rightarrow\mu,\tilde\mu'\rightarrow\mu'$.\footnote{The minimums in eq. \ref{eq:rob} should be thought of as a suitable refinement of the topology on $\Delta \Sigma$. For example, in Finite Deviation space the minimum will be obtained for any sequence satisfying $\sum n|\tilde\mu_n-\mu_n|\rightarrow 0$.}
\end{defn}
In essence, we are looking for full-participation equilibria that are also valid descriptions of `almost full-participation' stationary equilibria. 
When we add a measure of Always Deviators to an equilibrium, their deviation is amplified through their influence. An equilibrium is more robust than another if it amplifies deviation by a smaller multiplier.

The key factor in determining the stability/robustness of an equilibrium is the aforementioned reproduction rate, defined as how many deviations a single deviation causes in those that directly observe it: if $\mu$ is a Finite Deviation strategy, the \textbf{reproduction rate} is $R(\mu)=\beta\E[n\sim\mu]{n}$.
\begin{lem}\label{lem:rob}
    For Finite Deviation strategies, local and global stability are equivalent, and both are implied by robustness. A necessary condition for stability and robustness is $R(\mu)\le 1$; a sufficient condition for both is $R(\mu)<1$.

    For two behaviourally robust PSE $\mu,\mu'$, the strategy $\mu$ is strictly more robust than $\mu'$ if
    $$
    R(\mu)<R(\mu').
    $$
\end{lem}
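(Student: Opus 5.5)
I will linearize the deviation dynamics of a Finite Deviation population strategy around the full-participation state. I will then read stability, robustness and the robustness ordering off a single renewal-type equation.

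\emph{Dynamics.} Let $\mu\in\Delta\mathbb{N}_0$ describe the contributing states, and let $\lambda_t$ be the deviation rate in period $t$. An agent of type $n$ deviates at $t$ iff they observed at least one deviation in some period $t-1,\dots,t-n$. With $\beta$ observations per period, each deviation is seen with probability $\lambda$, so the chance of seeing at least one deviation is $q(\lambda)=1-\E{(1-\lambda)^{k}}$ with $\E{k}=\beta$. Hence
\begin{equation*}
\lambda_t=\E[n\sim\mu]{1-\prod_{s=1}^{n}\bigl(1-q(\lambda_{t-s})\bigr)}=:F(\lambda_{t-1},\lambda_{t-2},\dots).
\end{equation*}
The map $F$ is increasing and concave in each argument, and $F(0)=0$. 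Its linearization at $0$ is $\lambda_t\approx\beta\sum_{s\ge1}\mu[n\ge s]\lambda_{t-s}$. The kernel has total mass $\beta\sum_s\mu[n\ge s]=\beta\E{n}=R(\mu)$.

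\emph{Stability.} If $R(\mu)<1$, take $\Lambda_t=\max_{s\le t}\lambda_s$. Subadditivity of $q$ and of the union bound gives $F(\lambda)\le R(\mu)\sup_s\lambda_{t-s}$, so deviations contract geometrically. This holds for every $\epsilon$, which gives global, hence local, stability. If $R(\mu)>1$, a small constant perturbation $\lambda_{-t}=\epsilon$ is amplified by the linear part. Consider the stationary fixed point $\lambda=F(\lambda,\lambda,\dots)$: it has a strictly positive solution $\lambda^\ast$, and the iterates of the monotone map $F$ started from $\epsilon<\lambda^\ast$ increase towards $\lambda^\ast$. So the deviation rate leaves the $\epsilon$-band and local stability fails. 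Since global stability implies local stability, and local instability implies global instability, the two notions coincide. I would also check that for $R(\mu)\le1$ the nonlinear terms only help, because $F$ is concave.

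\emph{Robustness.} Add a measure $\eta$ of Always Deviators, and perturb the remaining population to $\tilde\mu$ so that it stays a PSE; by Theorem~\ref{thm:GT} it satisfies the analogue of eq.~\ref{eq:DDR}, and it is stationary. The stationary deviation rate then solves
\begin{equation*}
\lambda=\eta+(1-\eta)F_{\tilde\mu}(\lambda).
\end{equation*}
To first order this is $\lambda=\eta+R(\tilde\mu)\lambda+o(\lambda)$, so
\begin{equation*}
\frac{\lambda}{\eta}\to\frac{1}{1-R(\mu)}
\end{equation*}
whenever $R(\mu)<1$ and $\sum n|\tilde\mu_n-\mu_n|\to0$, which ensures $R(\tilde\mu)\to R(\mu)$. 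If $R(\mu)>1$, concavity forces $\lambda$ to stay bounded away from $0$ as $\eta\to0$. Then $\tilde\mu$ cannot converge to a full-participation $\mu$, so robustness implies $R\le1$, and through the previous step it implies stability. Because $1/(1-R)$ is strictly increasing in $R$, the minimal amplification in eq.~\ref{eq:rob} is ordered strictly by $R$. This gives the final claim.

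\emph{Main obstacle.} The hard part is the existence half of robustness: for $R(\mu)<1$ I must actually construct the PSEs $\tilde\mu\to\mu$. With deviations present, an agent's externality changes, because deviation is observed with background probability $\lambda$. Eq.~\ref{eq:DDR} must then hold with a modified $D$, and the Always Deviate agents must also be indifferent. I would first use Theorem~\ref{thm:GT} as a black box, applied to the perturbed strategy. I would then perturb the weights $\mu_n$ continuously, via the implicit function theorem in the finite-dimensional family spanned by the support of $\mu$ plus Always Deviate, to restore eq.~\ref{eq:DDR} at the perturbed $\lambda$. Separately, I would need to handle the boundary case $R(\mu)=1$ and the minimum over approximating sequences; the footnote's topology is what makes that minimum attained.
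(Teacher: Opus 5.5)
The skeleton matches the paper's. Your $F$ is the induced deviation rate $\Lambda^N_\mu$ with $N=\infty$, its linearization has mass $R(\mu)$, and the multiplier $1/(1-R)$ comes from $\lambda=\eta+(1-\eta)\Lambda(\lambda)$, as in Lemma~\ref{lem:rob2}. Your union-bound estimate $F\le R(\mu)\sup_s\lambda_{t-s}$ is a cleaner route to global stability for $R(\mu)<1$ than the paper's truncation-and-contraction argument. With unbounded support the decay need not be geometric, but the $\limsup$ form of the renewal inequality still gives $\lambda_t\to0$.

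\textbf{The case $R(\mu)=1$.} This is the main gap, and it is exactly where ``local and global stability are equivalent'' and ``both are implied by robustness'' have content. Your sentence ``global stability implies local stability, and local instability implies global instability'' states one implication twice, since the second clause is the contrapositive of the first. What you actually prove is the dichotomy $R<1$ (both hold) versus $R>1$ (both fail), which is silent at $R=1$. There, ``the nonlinear terms only help because $F$ is concave'' is false in general. For $\mu=\sigma_1$ with exactly one observation per period (a PSE when $\delta=\overline{\delta}_1$), $F(\lambda_{t-1},\lambda_{t-2},\dots)=\lambda_{t-1}$ is linear, and every perturbation persists forever. You need two further facts.
\begin{enumerate}
\item[(i)] This is the only exception. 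Otherwise some type $n\ge2$ has positive weight, or at least two observations occur with positive probability. Then $\Lambda^N_\mu$ is strictly concave with $(\Lambda^N_\mu)''$ weakly decreasing in $N$, which gives $\sup_N\Lambda^N_\mu(\lambda)/\lambda<1$ for every $\lambda>0$ and hence global stability. This is the end of the proof of Lemma~\ref{lem:stab}.
\item[(ii)] For robust $\Rightarrow$ stable, this exceptional equilibrium must not be robust (Proposition~\ref{prop:rob.edge}). With one observation per period and background rate $\lambda>0$, a type-$n$ agent has impulse response $(1-\lambda)^{n-1}$ for $n$ periods. So $D_\lambda(n)=(1-\lambda)^{n-1}\sum_{m\le n}\delta^m\le\delta\,(1-(1-\lambda)^n)/\lambda$. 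Combined with stationarity, $D_\lambda(\tilde\mu)\le\delta(\lambda-\eta)/\lambda<\overline{\delta}_1$ for any Finite Deviation approximant with Always Deviate mass $\eta>0$, so no approximating PSE exists.
\end{enumerate}

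\textbf{The robustness half.} This is only sketched, in both directions.

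For existence when $R(\mu)<1$, an implicit function theorem on ``the finite-dimensional family spanned by the support of $\mu$'' is unavailable when $\mu$ has infinite support; $R<1$ only gives a finite mean. Also, at $\lambda>0$ your weights must be type weights with each component in its steady state over its deviating states. The paper avoids the IFT. For each small $\lambda$ it puts every component in its steady state at observed rate $\lambda$, and picks $\eta_\lambda$ from $\lambda=\eta_\lambda+(1-\eta_\lambda)\Lambda(\lambda)$, which is solvable because $\Lambda(\lambda)<\lambda$. It builds two such sequences with $D$ below and above $\overline{\delta}_1$ and mixes them. The mixture is still stationary at the common $\lambda$, and $D$ is linear in the mixture. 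Always Deviators impose no extra condition: once eq.~\ref{eq:DDR} holds, every agent is indifferent at every history.

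For the ordering, eq.~\ref{eq:rob} minimizes over all weakly convergent PSE sequences. So you need $\liminf\lambda/\eta\ge1/(1-R(\mu))$ along every such sequence, not merely the limit along sequences with $\sum n|\tilde\mu_n-\mu_n|\to0$. Outside that topology your $o(\lambda)$ is not uniform in $\tilde\mu$, because small mass on very large $n$ saturates at tiny $\lambda$. Weak convergence also does not give $R(\tilde\mu)\to R(\mu)$. The paper truncates instead. It picks $N$ with $(\Lambda^N_\mu)'>R(\mu)-\epsilon$ near $0$, a property that survives weak convergence, so stationarity gives $\lambda\ge\eta+(R(\mu)-\epsilon)\lambda$. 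The same bound handles $R(\mu)>1$ without your appeal to concavity of $F_{\tilde\mu}$, which alone gives no uniform lower bound on the positive fixed point.
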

Equivalence of local and global stability holds as Finite Deviation strategies have maximum influence if there is full participation in the norm. With higher deviation rates, some agents may be in deviating states for high-$n$ strategies, reducing the influence a deviation has on them. This same effect makes robustness more difficult to achieve since it involves a slightly higher deviation rate (although the concepts only differ in the edge case $R(\mu)=1$). Since robustness is the strongest criterion for Finite Deviation equilibria, we will often refer solely to equilibria as robust, leaving stability implied.

The last statement says that more robust equilibria are intuitively those that induce fewer deviations. This contrasts with the criterion for the existence of equilibria in Prop. \ref{prop:exist}, which favours more reactive strategies, highlighting the tension between robustness and existence. 

 We defer the proof to Appendix \ref{app:proof}, which considers the result for more general strategies and analyzes the edge case $R(\mu)=1$.\footnote{In this case, stability and robustness depends on other details of the equilibrium, for example robustness may depend on $\beta$ or the structure of our strategy space (allowing equilibria to be approximated from different directions). In Proposition \ref{prop:rob.edge} we specify how Finite Deviation strategy space is sufficient to determine robustness under (additional) natural conditions on the approximating sequence.}

The discounted reproduction rate $D(\cdot)$ and reproduction rate $R(\cdot)$ are both increasing in $\mu$ under the FOSD order, but the present-bias of $D(\cdot)$ means that it is increasing under mean-preserving contractions, while $R(\cdot)$ is a long-run statistic that is preserved by mean-preserving contractions. This suggests that deviations should be front-loaded as much as possible to increase their salience to agents at minimal cost to robustness. 

This front-loading process can be described by the convex-SOSD order (\textbf{cSOSD}): we say $\mu$ is cSOSD-larger than $\mu'$, if $\E[n\sim\mu]{g(n)}\ge \E[n\sim\mu']{g(n)}$ for all increasing, \textit{convex} $g$. A cSOSD-smaller strategy will thus have a smaller reproduction rate $R$, but potentially a larger discounted reproduction rate $D$.
\begin{cor}    
    If $\mu,\mu'$ are robust Finite Deviation PSE such that $\mu'$ is cSOSD-smaller than $\mu$, then $\mu'$ is more robust than $\mu$. Such PSE exist iff there is a non-trivial mean-preserving contraction $\hat{\mu}\neq\mu$ of $\mu$.
\end{cor}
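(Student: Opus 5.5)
The first claim follows directly from Lemma \ref{lem:rob}, since $n\mapsto n$ is increasing and convex. If $\mu'$ is cSOSD-smaller than $\mu$, then $\E[n\sim\mu']{n}\le\E[n\sim\mu]{n}$, so $R(\mu')\le R(\mu)$. We need this inequality to be strict. Suppose instead that $R(\mu')=R(\mu)$. Equal means together with cSOSD ordering make $\mu'$ a mean-preserving contraction of $\mu$; indeed, equal means plus $\E{g}$ ordered for all increasing convex $g$ extends to all convex $g$, since any convex $g$ can be written as an increasing convex function plus a linear term in $n$. Now $f(n)=1-\delta^n$ is strictly concave, so if $\mu'\neq\mu$ we get $\E[\mu']{f}>\E[\mu]{f}$, i.e. $D(\mu')>D(\mu)$. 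But both are PSE, so $D(\mu)=D(\mu')=\overline{\delta}_1$ by eq. \ref{eq:DDR}, which is a contradiction. Hence $R(\mu')<R(\mu)$, and Lemma \ref{lem:rob} says $\mu'$ is strictly more robust. The strict concavity argument is the step that needs care: two distinct distributions that are convex-ordered must give a strict inequality for a strictly concave $f$. I would verify this with the integrated-CDF representation on $\mathbb{N}_0$: $\E[\mu']{f}-\E[\mu]{f}=\sum_k (f(k+1)-2f(k)+f(k-1))\cdot(\text{nonpositive differences of integrated CDFs})$, and at least one of those differences is nonzero when $\mu'\ne\mu$.

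For the existence claim, the ``only if'' direction needs a distinct $\mu'$; the $\mu'=\mu$ case is excluded, as it is implicitly in the statement. Any such $\mu'$ has $R(\mu')<R(\mu)$, so it cannot be a mean-preserving contraction of $\mu$. The target is the stronger statement that $\mu$ admits a non-trivial mean-preserving contraction, i.e. $\mu$ is not a point mass (a point mass admits no non-trivial contraction). So assume $\mu=\delta_n$ is pure. Then any cSOSD-smaller $\mu'$ has $D(\mu')\le D(\delta_{\E[\mu']{n}})$ by Jensen, and $D(\delta_{\E[\mu']{n}})\le D(\delta_n)$ when the mean does not exceed $n$, at least for integer means; for a fractional mean, Jensen applied to $f$ is understood via the concave extension $x\mapsto1-\delta^x$. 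Equality throughout forces $\mu'=\delta_n$. So pure $\mu$ admits no distinct cSOSD-smaller PSE.

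For the ``if'' direction, suppose $\mu$ has a non-trivial mean-preserving contraction $\hat\mu$. Then $D(\hat\mu)>D(\mu)=\overline{\delta}_1$. Mix $\hat\mu$ toward $\delta_0$: take $\mu'=(1-s)\hat\mu+s\delta_0$ with $s\in(0,1)$ chosen so that $D(\mu')=\overline{\delta}_1$. This is possible by continuity, since $D(\delta_0)=0$. Then $\mu'$ is FOSD-below $\hat\mu$, hence cSOSD-below $\mu$, and it is a PSE by Theorem \ref{thm:GT}. It is robust because $R(\mu')<R(\mu)\le1$, using the sufficient condition in Lemma \ref{lem:rob}. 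I expect the main obstacle to be handling the converse cleanly when non-pure $\mu$ supports are unbounded, where the integrability of $n$ under $\mu$ must be assumed so that $R(\mu)<\infty$.
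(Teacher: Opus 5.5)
\textbf{Gap in the ``only if'' direction.} Your first claim and your ``if'' direction are sound. The strictness step fills in what the paper's ``immediate from Lemma~\ref{lem:rob}'' leaves implicit: equal means plus cSOSD force a mean-preserving contraction, which strictly raises $D$ unless $\mu'=\mu$, contradicting eq.~\ref{eq:DDR}. Mixing $\hat\mu$ with $\sigma_0$ until $D=\overline{\delta}_1$ is exactly the paper's PSE with reactivity distribution $\hat\mu^+$.

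The problem is the step ``$\mu$ admits a non-trivial mean-preserving contraction, i.e.\ $\mu$ is not a point mass.'' This is false in $\Delta\mathbb{N}_0$, which is where contractions must live for your ``if'' construction to yield a Finite Deviation strategy.

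\begin{itemize}
\item Suppose the support of $\mu$ is $\{n,n+1\}$ with both weights positive. Testing the convex functions $(x-n-1)^+$ and $(n-x)^+$ shows that any contraction $\hat\mu$ puts no mass above $n+1$ or below $n$. Since it also has the same mean, $\hat\mu=\mu$.
\item So the distributions with no non-trivial contraction are exactly the concentrated ones of Definition~\ref{def:concentrate}, for example the 1-Deviation equilibrium on $\{0,1\}$. Your Jensen argument only handles the degenerate sub-case of a point mass.
\item This missed case is the substantive one: the concentrated PSE is generically two-point, and it is the one the next corollary singles out.
\item Switching to real-valued contractions does not rescue the argument. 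The point mass at the mean $n+\mu_{n+1}$ would then be a non-trivial contraction of a two-point $\mu$. Yet, as shown below, no distinct cSOSD-smaller PSE exists, so your ``if'' direction would fail under that reading.
\end{itemize}

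\textbf{How to close it.} Let the support of $\mu$ lie in $\{n,n+1\}$ and let $\mu'$ be cSOSD-smaller.
\begin{itemize}
\item The increasing convex test functions $(x-n-1)^+$ and $(x-n)^+$ show that $\mu'$ has no mass above $n+1$ and that $\mu'_{n+1}\le\mu_{n+1}$. Hence $\mu'$ is FOSD-below $\mu$.
\item Since $n\mapsto 1-\delta^n$ is strictly increasing, $D(\mu')<D(\mu)$ unless $\mu'=\mu$. This contradicts eq.~\ref{eq:DDR}; it is the FOSD-incomparability in Proposition~\ref{prop:exist}.
\item Conversely, a non-concentrated $\mu$ has points $a<b$ in its support with $b-a\ge 2$. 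Moving mass $\epsilon$ from $a$ to $a+1$ and from $b$ to $b-1$ gives an integer-valued non-trivial contraction.
\end{itemize}

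Your worry about unbounded supports is moot: robustness already gives $R(\mu)\le 1$, so means are finite.
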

The first statement is immediate from Lemma \ref{lem:rob}, to find such a PSE for the second part we could take the PSE with reactivity distribution $\hat{\mu}^+$. 

This result describes the specific sense in which strategic homogeneity increases robustness --- Proposition \ref{prop:exist} shows that PSEs are never FOSD-comparable, thus it is specifically the non-FOSD comparisons of the cSOSD order that are relevant (ie. those that involve a mean-preserving contraction). 


Applying the previous corollary, the maximally robust equilibria will be supported on two consecutive Finite Deviation strategies. We give this equilibrium a name:
\begin{defn}\label{def:concentrate}
    A Finite Deviation strategy $\mu$ is \textbf{concentrated} if $\supp{\mu}\subseteq\{n,n+1\}$ for some $n\in\mathbb{N}_0$.
\end{defn}
A special case is the 1-Deviation equilibrium which has support $\{0,1\}$. Concentrated strategies are FOSD ordered, and thus there will be a unique concentrated equilibrium, which exists whenever a non-Grim Trigger equilibrium exists (ie. $\delta>\overline{\delta}_\infty$).

The previous proposition says that Finite Deviation equilibria with more concentrated distributions will have lower reproduction rates. 
\begin{cor}
    If any robust PSE $\mu$ exists, then the concentrated equilibrium is robust, and is the most robust equilibrium (uniquely so among Finite Deviation PSEs).
\end{cor}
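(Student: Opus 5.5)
The corollary concerns Finite Deviation PSEs, so I will work in $\Delta\mathbb{N}_0$ and use the two statistics $D(\mu)=\beta\tfrac{\delta}{1-\delta}\E[n\sim\mu]{1-\delta^n}$ and $R(\mu)=\beta\E[n\sim\mu]{n}$. By Theorem \ref{thm:GT}, a Finite Deviation PSE is exactly a distribution with $D(\mu)=\overline{\delta}_1$. By Lemma \ref{lem:rob}, robustness holds when $R<1$, fails when $R>1$, and robustness is ranked strictly by $R$. So the claim reduces to an optimization: among $\mu$ with $D(\mu)=\overline{\delta}_1$, the concentrated one uniquely minimizes $R(\mu)$.

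\textbf{Step 1 (existence of the concentrated equilibrium).} Suppose a robust PSE $\mu$ exists. Then $R(\mu)\le 1<\infty$, so $\mu$ has finite mean and places no mass on Grim Trigger. This puts us in the case $\delta>\overline{\delta}_\infty$. Let $f(n)=1-\delta^n$ and let $\bar f$ be its piecewise-linear interpolation on $[0,\infty)$, which is strictly increasing. Concentrated distributions on $\{n,n+1\}$ with mean $m$ give $\E{f}=\bar f(m)$, so the mean parametrizes them. Hence there is a unique $m^\ast$ with $\beta\tfrac{\delta}{1-\delta}\bar f(m^\ast)=\overline{\delta}_1$, and this determines the unique concentrated PSE $\mu^c$.

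\textbf{Step 2 (comparison).} Since $\bar f$ is concave and agrees with $f$ on $\mathbb{N}_0$, Jensen's inequality gives $\E[\mu]{f}\le\bar f(\E[\mu]{n})$. Equality holds iff $\mu$ is supported on two consecutive integers, because $f$ is strictly concave, so $\bar f$ has a kink at every integer and no three-point support can lie on one linear piece. Both $\mu$ and $\mu^c$ are PSEs, so $\bar f(m^\ast)=\E[\mu]{f}\le \bar f(\E[\mu]{n})$. Monotonicity of $\bar f$ then gives $m^\ast\le \E[\mu]{n}$, that is, $R(\mu^c)\le R(\mu)$. The inequality is strict unless $\mu$ is itself concentrated, and then $\mu=\mu^c$ by uniqueness.

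\textbf{Step 3 (robustness transfer and ranking).} If $R(\mu)<1$, then $R(\mu^c)<1$, so $\mu^c$ is robust by Lemma \ref{lem:rob}. Whenever $\mu\ne\mu^c$, the strict inequality $R(\mu^c)<R(\mu)$ together with the last part of Lemma \ref{lem:rob} makes $\mu^c$ strictly more robust. This gives uniqueness of the most robust equilibrium among Finite Deviation PSEs.

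\textbf{Main obstacle: the edge case $R(\mu)=1$.} Here Lemma \ref{lem:rob} does not settle robustness. If $\mu\neq\mu^c$, we still get $R(\mu^c)<1$, so $\mu^c$ is robust and strictly better. If instead $\mu$ is already concentrated, then $\mu=\mu^c$ and there is nothing to prove. So the edge case drops out. The remaining care is in verifying strict concavity of the interpolation, so that the Jensen gap is strict for any non-concentrated $\mu$.
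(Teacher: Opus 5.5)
Your proof is correct, but it is organized differently from the paper's.

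\textbf{The paper's route.} The paper derives the corollary from the preceding cSOSD corollary. A non-concentrated Finite Deviation PSE admits a non-trivial mean-preserving contraction. This front-loads deviations, so it raises $D$ while keeping $R$ fixed. Remixing the contracted reactivity distribution with $\sigma_0$ then yields a PSE with strictly smaller $R$ (the same construction appears in the proof of Proposition \ref{prop:mono.suff}(b)). Hence no non-concentrated PSE is maximally robust, and since concentrated strategies are FOSD-ordered, there is exactly one concentrated PSE.

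\textbf{Your route.} You compare every PSE directly with the concentrated one. Jensen's inequality for the interpolation $\bar f$ says that the concentrated distribution with a given mean maximizes $\mathbb{E}[1-\delta^n]$. The equilibrium condition $D=\overline{\delta}_1$ then forces $m^\ast\le\mathbb{E}_{n\sim\mu}[n]$, strictly unless $\mu$ is concentrated. The underlying fact is the same in both proofs: concavity of $n\mapsto 1-\delta^n$, with the concentrated distribution as the maximal contraction.

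\textbf{What each buys.} Your version is a one-shot global comparison. It shows $R(\mu^c)<R(\mu)$ for every other Finite Deviation PSE without needing a minimizer of $R$ to exist. The paper's sketch leaves this implicit, since ``any maximally robust PSE is concentrated'' does not by itself show that the concentrated PSE beats all others. Your version also settles the $R(\mu)=1$ edge case explicitly. The paper's route, in exchange, ties the result to the homogeneity (cSOSD) interpretation, and it is the form that extends to general strategies through impulse responses.

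\textbf{One small fix.} The equality case of Jensen is that the support of $\mu$ lies in a single linear piece $[k,k+1]$ of $\bar f$. Phrasing it as ``no three-point support can lie on one linear piece'' overlooks two-point supports such as $\{n,n+2\}$, although your conclusion is unaffected.
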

To this point we have restricted attention to Finite Deviation PSEs, this will be extended to general PSEs via Proposition \ref{prop:mono.suff}, which shows that Finite Deviation PSEs are the most robust PSEs.

For the concentrated equilibrium $\mu$ with support $\{n,n+1\}$ to be robust, it is necessary that $n\beta<R(\mu)\le 1$. Thus if $\beta\ge 1$ and there exists a robust equilibrium, then the 1-Deviation equilibrium is robust.

\section{Publicity Effects}\label{sec:pub}
The publicity $\beta$ of actions varies greatly across contexts. In urban settings, the actions made by individuals in public are observed by many more people than actions taken in more rural settings.

This increased publicity enables agents to have more influence, as we have seen, additional influence has the ability to increase the incentive to contribute, as well as decrease the robustness of equilibria. Theorem \ref{thm:GT} showed that increasing $\beta$ intuitively makes it easier to support fragile equilibria; we will see that for the most robust equilibrium, these two effects of increased publicity largely cancel each other out.

Let $\overline{\beta}_1:=\overline{\delta}_1/\delta$ be the minimum publicity for the 1-Deviation equilibrium to exist.

\begin{prop}[Publicity Effects]\label{prop:pub.rob}
    If there exists a robust equilibrium at publicity $\beta$, then one exists for larger publicities $\beta'>\beta$. If $\beta,\beta'>\overline{\beta}_1$ then a robust equilibrium exists at $\beta$ iff it exists at $\beta'$.

    The robustness of the concentrated equilibrium is strictly increasing in $\beta$ until $\beta=\overline{\beta}_1$ and constant thereafter.
\end{prop}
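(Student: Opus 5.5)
The plan is to reduce everything to one explicit formula for the reproduction rate of the concentrated equilibrium as a function of $\beta$, and then read off monotonicity. Write $f(n)=\delta+\delta^2+\cdots+\delta^n=\tfrac{\delta}{1-\delta}(1-\delta^n)$, so that by eq. \ref{eq:genexist} the equilibrium condition eq. \ref{eq:DDR} becomes $\E[n\sim\mu]{f(n)}=\overline{\delta}_1/\beta$. Let $F$ be the piecewise-linear interpolation of $f$ on $[0,\infty)$; it has slope $\delta^k$ on $[k-1,k]$. Since $f$ is increasing and concave, $F$ is increasing and concave with $F(0)=0$, and its inverse $G:=F^{-1}$ is increasing and convex with $G(0)=0$. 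A concentrated strategy on $\{n,n+1\}$ with mean $m$ satisfies $\E[n\sim\mu]{f(n)}=F(m)$. Hence the unique concentrated PSE at publicity $\beta$ has mean $m(\beta)=G(\overline{\delta}_1/\beta)$, and its reproduction rate is
$$R^\ast(\beta)=\beta\,G(\overline{\delta}_1/\beta)=\overline{\delta}_1\,\frac{G(y)}{y},\qquad y:=\overline{\delta}_1/\beta .$$
This requires $\overline{\delta}_1/\beta<\sup f=\delta/(1-\delta)$, i.e. $\delta>\overline{\delta}_\infty$, a condition that only gets weaker as $\beta$ grows.

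The monotonicity of $R^\ast$ then follows from convexity of $G$ with $G(0)=0$: the ratio $G(y)/y$ is nondecreasing in $y$, so $R^\ast$ is nonincreasing in $\beta$.
\begin{itemize}
\item On the first linear piece, $y\le\delta=f(1)$, equivalently $\beta\ge\overline{\beta}_1$. There $G(y)=y/\delta$, so $R^\ast(\beta)=\overline{\delta}_1/\delta$ is constant; this is the 1-Deviation equilibrium.
\item For $y>\delta$, i.e. $\beta<\overline{\beta}_1$, the slope of $G$ is $\delta^{-k}$ with $k\ge2$. This strictly exceeds the average slope $G(y)/y$, because the average includes the first piece with slope $1/\delta$. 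Hence $\tfrac{d}{dy}\tfrac{G(y)}{y}=\tfrac{yG'-G}{y^2}>0$ wherever $G'$ exists, and $G(y)/y$ is strictly increasing in $y$ there.
\end{itemize}
So $R^\ast$ is strictly decreasing in $\beta$ on $\beta<\overline{\beta}_1$ and constant on $\beta\ge\overline{\beta}_1$. By Lemma \ref{lem:rob}, a strictly smaller $R$ means strictly more robust, which gives strict improvement of robustness below $\overline{\beta}_1$.

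For the constant part I need more than the lemma, which only gives a strict ordering. I would appeal to the appendix computation of the amplification factor $\lim\lambda_{\tilde\mu}/\tilde\mu[\sigma_\text{AD}]$ and show that it depends on the equilibrium only through $R$ (heuristically $1/(1-R)$). One caveat is that $\beta$ itself could enter this factor, e.g. through the binomial observation structure at the edge. If it does, I would check directly that for the $\{0,1\}$ equilibrium the amplification depends only on $\beta\mu_1=\overline{\delta}_1/\delta$, which is $\beta$-invariant.

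For the existence statements, I use the corollary: a robust PSE exists iff the concentrated equilibrium is robust. By Lemma \ref{lem:rob}, away from the edge this holds iff $R^\ast(\beta)<1$. Since $R^\ast$ is nonincreasing in $\beta$, robust existence at $\beta$ implies it at $\beta'>\beta$. Since $R^\ast$ is constant on $(\overline{\beta}_1,\infty)$, where the concentrated equilibrium is always the $\{0,1\}$ one, existence is equivalent at any two such publicities.

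I expect the main obstacle to be the edge case $R^\ast=1$, where Lemma \ref{lem:rob} does not decide robustness. For the second claim, with $\beta,\beta'>\overline{\beta}_1$, I would use Proposition \ref{prop:rob.edge} applied to the $\{0,1\}$ equilibrium and check that its conclusion does not depend on $\beta$. For the first claim, if $R^\ast(\beta)=1$ with $\beta<\overline{\beta}_1$, then strict decrease gives $R^\ast(\beta')<1$ for all $\beta'>\beta$ in that range, and the edge case is avoided. If $\beta\ge\overline{\beta}_1$, the case reduces to the $\beta$-invariant $\{0,1\}$ analysis.
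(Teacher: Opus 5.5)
Your proof is correct, but it takes a different route from the paper's. The paper never computes the concentrated equilibrium. It dilutes the concentrated PSE $\mu$ at $\beta$ into $(1-\tfrac{\beta}{\beta'})\sigma_0\oplus\tfrac{\beta}{\beta'}\mu$, which is a PSE at $\beta'$ with the same $D$ and the same $R$. If $\mu$ is not the 1-Deviation equilibrium, this diluted PSE is not concentrated, so the concentrated equilibrium at $\beta'$ (the unique $R$-minimizer) is strictly more robust. If $\mu$ is supported on $\{0,1\}$, the dilution is itself the concentrated equilibrium at $\beta'$ and nothing improves.

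Your inequality $G(ty)\le tG(y)$ for $t\in[0,1]$ (convexity of $G$ with $G(0)=0$) is the analytic form of this dilution. Mixing with $\sigma_0$ at weight $t=\beta/\beta'$ scales both $\mathbb{E}[f(n)]$ and the mean of $n$ by $t$.

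The two routes buy different things:
\begin{itemize}
\item The paper's argument is shorter and applies verbatim to any full-participation PSE, since the set of attainable $(D,R)$ pairs only grows with $\beta$.
\item Yours gives the explicit best attainable reproduction rate $R^\ast(\beta)=\overline{\delta}_1G(y)/y$, from which $R=\overline{\beta}_1$ in Theorem \ref{thm:pub.rob} is immediate.
\item Yours also makes explicit two points the paper leaves implicit. First, robustness can be compared across different $\beta$ because the amplification factor is $1/(1-R)$ regardless of $\beta$ (proof of Lemma \ref{lem:rob2}). Second, you treat the edge case $R^\ast=1$.
\end{itemize}

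On that edge case, one correction: Proposition \ref{prop:rob.edge} is not $\beta$-invariant for the $\{0,1\}$ equilibrium, since its verdict changes at $\beta=1$. Your conclusions still hold. $R=\beta\mu_1=1$ forces $\beta\ge1$, with equality only for pure $\sigma_1$ at $\beta=\overline{\beta}_1=1$. That equilibrium is not robust in the Finite Deviation space, so the first claim is vacuous there. In the second claim, the strict inequalities $\beta,\beta'>\overline{\beta}_1=1$ put both publicities in the robust case $\beta>1$.
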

This says that publicity strengthens norms, but only until the threshold $\overline{\beta}_1$: if a robust equilibrium exists for any publicity, then the 1-Deviation equilibrium with publicity $\overline{\beta}_1$ is robust. Such an equilibrium would have reproduction rate $\overline{\beta}_1$, immediately (modulo the edge case $\overline{\beta}_1=1$, treated in Proposition \ref{prop:rob.edge}) leading to the following result:
\begin{thm}\label{thm:pub.rob}
    If an equilibrium is robust for some $\beta$, then $\overline{\beta}_1\le 1$ (equivalently $\delta\ge \overline{\delta}_1$). 
    
    This is a sufficient condition for a  robust equilibrium to exist if $\beta>\overline{\beta}_1$, in which case the most robust equilibrium $\mu$ will have reproduction rate $R(\mu)=\overline{\beta}_1.$
\end{thm}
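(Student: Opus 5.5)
The plan is to reduce everything to the concentrated equilibrium, using Lemma \ref{lem:rob}, the corollaries on concentration, and Proposition \ref{prop:pub.rob}.

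\emph{Necessity.} Suppose some equilibrium is robust at publicity $\beta$. By the corollary on concentrated strategies, the concentrated equilibrium at $\beta$ is then robust. By Proposition \ref{prop:pub.rob}, a robust equilibrium then exists at every $\beta'>\beta$, and its robustness is weakly increasing in $\beta$. Moreover, once $\beta'\ge\overline{\beta}_1$ the concentrated equilibrium is the 1-Deviation equilibrium: indeed $D(\sigma_1)=\delta\beta'\ge\overline{\delta}_1$, so the PSE on $\{0,1\}$ exists with $\mu_1=\overline{\delta}_1/\delta\beta'$. Its reproduction rate is
$$R(\mu)=\beta'\mu_1=\overline{\delta}_1/\delta=\overline{\beta}_1,$$
and this value does not depend on $\beta'$, which is consistent with robustness being constant past the threshold. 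Take $\beta'\ge\max(\beta,\overline{\beta}_1)$. Then the 1-Deviation equilibrium at $\beta'$ is robust, and Lemma \ref{lem:rob}'s necessary condition $R(\mu)\le 1$ gives $\overline{\beta}_1\le1$, i.e. $\delta\ge\overline{\delta}_1$. This step needs Proposition \ref{prop:pub.rob} to imply robustness carries upward in $\beta$. If it only guarantees existence of \emph{some} robust equilibrium, I will invoke the concentrated-equilibrium corollary again at $\beta'$.

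\emph{Sufficiency and the optimal reproduction rate.} Suppose $\overline{\beta}_1\le1$ and $\beta>\overline{\beta}_1$. The 1-Deviation equilibrium exists (since $\delta\beta>\overline{\delta}_1$) and has $R=\overline{\beta}_1\le1$. If $\overline{\beta}_1<1$, Lemma \ref{lem:rob}'s sufficient condition gives robustness directly. By the corollary, the most robust equilibrium is the concentrated one. For $\beta\ge\overline{\beta}_1$ the concentrated support is $\{0,1\}$: because concentrated strategies are FOSD-ordered and the $\{0,1\}$ equilibrium exists, it is the unique concentrated equilibrium. Hence $R(\mu)=\overline{\beta}_1$.

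\emph{Main obstacle.} The hard part is the edge case $\overline{\beta}_1=1$, where Lemma \ref{lem:rob} is silent. There I would defer to Proposition \ref{prop:rob.edge}. My plan is to construct the approximating PSE sequence directly. Add a measure $\epsilon$ of Always Deviators to the 1-Deviation mixture and re-tune $\mu_1$ so that indifference $D=\overline{\delta}_1$ holds at the perturbed deviation rate. Then check that the stationary deviation rate $\lambda$ solving $\lambda=\epsilon+(1-\epsilon)\mu_1(1-(1-\lambda)^\beta)$ has a small solution with $\lambda\to0$. This is exactly where strictness $\beta>\overline{\beta}_1$ should enter. At a deviation rate $\lambda$, the probability of observing a deviation is $1-(1-\lambda)^\beta$, which is strictly less than $\beta\lambda$ when $\beta>1$. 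This concavity is what allows the fixed point to exist even though $R=1$. The remaining work is to confirm that the re-tuned $\mu_1$ keeps agents indifferent and that $\mu_1\to\overline{\delta}_1/\delta\beta$.
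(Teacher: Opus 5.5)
Your necessity argument and the case $\overline{\beta}_1<1$ follow the paper's route: Proposition~\ref{prop:pub.rob} takes you to the 1-Deviation equilibrium with $R=\overline{\beta}_1$, and Lemma~\ref{lem:rob} finishes. Those parts are fine. Necessity can be shortened: $\nu_n\ge 0$ and $\delta^n\le\delta$ give $\overline{\delta}_1=D(\mu)\le\delta R(\mu)$ for every full-participation PSE, so $R(\mu)\ge\overline{\beta}_1$ directly.

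\textbf{The gap is the boundary case $\overline{\beta}_1=1$.} There the concavity you invoke works against you. Let $f(\lambda)=\mathbb{E}[1-(1-\lambda)^b]$. Re-tuning $\mu_1$ for indifference at deviation rate $\lambda$ means $(1-\epsilon)\mu_1 f'(\lambda)=\overline{\delta}_1/\delta=1$. Since $f$ is concave with $f(0)=0$, we have $f(\lambda)\ge\lambda f'(\lambda)$. So the 1-Deviation agents alone already produce deviation rate $(1-\epsilon)\mu_1 f(\lambda)\ge\lambda$, and your stationarity equation forces $\epsilon\le 0$ for every $\lambda>0$. The inequality is strict once $b\ge 2$ has positive probability; for $\beta=2$ it gives $\epsilon=-\lambda^2/(2(1-\lambda))$. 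The bound $f(\lambda)<\beta\lambda$ would help only if $\mu_1$ stayed at $1/\beta$. But then the pivot probability $(1-\lambda)^{b-1}$ falls, so $D<\overline{\delta}_1$ and indifference fails.

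Deferring to Proposition~\ref{prop:rob.edge} does not close this. Its $\beta>1$ case is exactly your construction: set $\tilde\mu_\lambda[\Sigma_1']\,\mathbb{E}[b(1-\lambda)^{b-1}]=1$, then solve for $\tilde\mu_\lambda[\sigma_{\text{AD}}]$. The same inequality makes that Always-Deviate mass negative, so the paper's own argument for this boundary case has the same defect.

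Switching to another Finite Deviation mixture does not help either. An $n$-Deviation component of mass $m_n$ has concave steady-state deviation rate $g_n(\lambda)=\mathbb{E}[1-(1-\lambda)^{b_1+\cdots+b_n}]$. It contributes $m_n g_n'(\lambda)\frac{1}{n}\sum_{k=1}^{n}\delta^{k}\le \delta\, m_n g_n'(\lambda)$ to $D$. Hence indifference at $\delta=\overline{\delta}_1$ forces $\sum_n m_n g_n'(\lambda)\ge 1$. That in turn gives $\sum_n m_n g_n(\lambda)\ge\lambda$, leaving no room for Always Deviators.

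A correct proof must therefore either state the sufficiency claim for $\overline{\beta}_1<1$, where your argument is complete, or exhibit approximating PSE that use strategies outside the Finite Deviation class.
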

Robust norms can exist even if actions are relatively private, and increasing the publicity beyond 1 has no effect on the maximum robustness. This also says that a necessary condition for robust PSE to exist is $\delta\ge \overline{\delta}_1$; that is, agents' effective altruism is larger than their impatience: $\alpha\ge 1-\delta$. This becomes a sufficient condition if $\beta>\overline{\beta}_1$ (this does not rule out robust equilibria for lower $\beta$).

As a back of the envelope calculation, suppose agents have discount rate $\delta=0.9$ and pure altruism $a=0.25$,\footnote{\citet{Dictator2} employ a dictator game experiment to estimate CES utility functions for altruism. On the margin, they estimate $a\approx 0.32$ for subjects they classify as `weakly selfish' (fairness concerns can shift this number up or down).} In this case Ripple Effects can sustain robust norms whose social benefit exceeds their cost by $g>30\%$ (assuming $\beta\ge 1$).
\begin{proof}[Proof of Proposition \ref{prop:pub.rob}]
    Conisder the concentrated equilibrium $\mu$ at publicity $\beta$, and consider a model with publicity $\beta'>\beta$. Then the strategy
    $$
    (1-\tfrac{\beta}{\beta'})\sigma_0\oplus \tfrac{\beta}{\beta'}\mu
    $$
    has the same reproduction rate and discounted reproduction rate as $\mu$. If $\mu$ is not the 1-Deviation equilibrium, then this is not the concentrated equilibrium, and thus the robustness can be improved upon. 
    
    On the other hand, if $\mu$ is the 1-Deviation equilibrium, the result is a concentrated equilibrium and robustness cannot be improved upon.
\end{proof}
While increasing publicity never decreases the robustness of the most robust equilibrium, it may have a negative effect depending on how it affects equilibrium selection. 

As a simple example, suppose that the publicity increases from $\beta_0$ to $\beta_1$, but the shock is only revealed to a measure $1-\gamma$ of agents, the ignorant measure $\gamma$ believe publicity is still $\beta_0$. Assuming that strategies are independent of ignorance, increasing the publicity too much can have negative effects on robustness:

\begin{prop}
    If $\gamma\frac{\beta_1}{\beta_0}>1$, increasing publicity decreases robustness of the most robust full-participation equilibrium.
\end{prop}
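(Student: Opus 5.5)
Throughout, I read the setup as follows. Before the shock, agents play the most robust (concentrated) full-participation equilibrium at publicity $\beta_0$. After the shock, publicity is $\beta_1$, but the ignorant measure $\gamma$ keeps best-responding as if publicity were $\beta_0$. The informed measure $1-\gamma$ knows $\beta_1$ and can re-optimize, knowing how ignorant agents behave. Since strategies are independent of ignorance, every agent is drawn from a common reactivity distribution, or at least the ignorant agents' distribution is pinned down by their beliefs.

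The plan is to bound from below the reproduction rate of any full-participation PSE in the new environment, and then to compare it with the old maximal robustness using Lemma \ref{lem:rob}.

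\textbf{Ignorant agents.} An ignorant agent believes publicity is $\beta_0$ and believes the population plays a $\beta_0$-equilibrium. By Theorem \ref{thm:GT} and the corollaries on concentrated equilibria, the most robust such belief-consistent strategy has $R_0:=\beta_0\E[n\sim\mu]{n}$ at its minimum feasible value $R_0^\ast$. This value is the optimal robustness before the shock. By Theorem \ref{thm:pub.rob}, if a robust equilibrium existed, $R_0^\ast\le 1$, and $R_0^\ast=\overline{\beta}_1$ once $\beta_0>\overline{\beta}_1$. In every case the ignorant agents satisfy $\E[n\sim\mu^{ig}]{n}\ge R_0^\ast/\beta_0$.

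\textbf{Lower bound on the new reproduction rate.} Each deviation is in fact observed by $\beta_1$ agents. A fraction $\gamma$ of these are ignorant, so the ignorant alone contribute $\beta_1\gamma\,\E[n\sim\mu^{ig}]{n}\ge \gamma\frac{\beta_1}{\beta_0}R_0^\ast$ to the true reproduction rate. Informed agents contribute a nonnegative amount. Hence
$$R(\tilde\mu)\ge \gamma\tfrac{\beta_1}{\beta_0}R_0^\ast>R_0^\ast$$
when $\gamma\frac{\beta_1}{\beta_0}>1$. By Lemma \ref{lem:rob}, a strictly larger reproduction rate means strictly less robust. If $R(\tilde\mu)>1$, the equilibrium is not robust at all, which is worse still. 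The informed agents could lower their own reactivity, even to Always Contribute, but this cannot push $R$ below the ignorant agents' contribution. The strict inequality therefore holds for the best achievable full-participation equilibrium.

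\textbf{Main obstacle.} The delicate step is justifying that the ignorant agents' reactivity stays at least as large as in the old most robust equilibrium. The answer depends on how equilibrium selection is modeled after the shock. I would handle this by fixing the ignorant agents at their pre-shock distribution, which is a best response to their beliefs by Theorem \ref{thm:GT}. I would then use the fact that the informed agents' best response only requires $D=\overline{\delta}_1$ at $\beta_1$. This is achievable because the ignorant agents already supply discounted influence $\gamma\frac{\beta_1}{\beta_0}\overline{\delta}_1>\overline{\delta}_1$. The residual imbalance means the informed agents' constraint cannot hold with equality, even by choosing Always Contribute. So a full-participation equilibrium may exist only in a weaker sense, or it is fragile. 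Either way, robustness weakly decreases, and it strictly decreases by the $R$ bound above.
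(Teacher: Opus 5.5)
Your argument is essentially the paper's: holding the ignorant measure $\gamma$ at the pre-shock strategy, their influence scales by $\beta_1/\beta_0$, so they alone push the discounted reproduction rate to $\gamma\frac{\beta_1}{\beta_0}\overline{\delta}_1>\overline{\delta}_1$ and the reproduction rate to at least $\gamma\frac{\beta_1}{\beta_0}R_0^\ast>R_0^\ast$, and Lemma \ref{lem:rob} then gives the loss of robustness. One correction to your last paragraph: informed agents do not need $D=\overline{\delta}_1$ --- the strict inequality simply makes Always Contribute their strict best response, which is exactly the paper's post-shock configuration (not a PSE of the full-information game, but nothing ``weaker'' or fragile about it), and there your lower bound on $R$ holds with equality.
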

Loosely, this says that increasing the publicity of actions faster than people adapt their behaviour under consideration of this publicity necessarily weakens norms.
\begin{proof}
    The discounted reproduction rate of the ignorant agents is initially $\gamma\overline{\delta}_1$. Due to the publicity shock, it increases to $\frac{\beta_1}{\beta_0}\gamma\overline{\delta}_1>\overline{\delta}_1$. Thus other agents have a strict preference to contribute (this is not an equilibrium of the full information game). However, the reproduction rate also increases by a factor of $\frac{\beta_1}{\beta_0}\gamma>1$ and thus results in a less robust (or possibly non-robust) equilibrium.
\end{proof}

\section{Analysis of General Strategies}\label{sec:strat.gen}
In this section, we extend our analysis to general strategy spaces, and show that our analysis of Finite Deviation strategies is with little loss in generality. Formally, we present two arguments: (i) a natural class of `monotone' population strategies are essentially equivalent to Finite Deviation strategies (Proposition \ref{prop:monotone}) and (ii) Finite Deviation PSE form the existence/robustness frontier (Proposition \ref{prop:mono.suff}).

\subsection{Impulse Response and Strategy Equivalence}\label{app:imp.eq}
For general strategies the key question for an agent is how their choice of action is likely to influence the future actions of others. This is summarized by the \textbf{impulse response}, defined as a sequence $\nu(\mu)\in[-1,1]^\mathbb{N}$ where $\nu_n$ is the additional probability an agent observing an action deviates $n$ period later if it is a contribution rather than a deviation. Thus for the pure $n$-Deviation strategy $\sigma_n$, we have $\nu_m(\sigma_n)=1\{m\le n\}$. 

Note that a full-participation PSE must have $\nu(\mu)\ge 0$ --- agents always contribute if they observe a contribution, so observing a deviation cannot increase the probability of contributing.
\begin{defn}
    A population strategy $\mu$ is \textbf{positive} if changing any observation from $D$ to $C$ weakly increases the probability of contributing. A strategy is \textbf{monotone} if $n\mapsto \nu_n(\mu)$ is decreasing.
    
    A strategy \textit{space} $\langle\Sigma,\sigma^+,\hat{a}\rangle$ is \textbf{monotone} if all contributing states continue contributing if they do not observe a deviation.
\end{defn}
These are natural properties: a positive strategy means that observing a contribution has a weakly positive influence on others' actions\footnote{This is a population level property: we allow for some agents to react opposite to an observation, as long as they are outweighed by others who are influenced positively by the observation}, while a monotone strategy actions' influence decrease over time. 


The discounted reproduction rate and reproduction rate generalize to
\begin{align*}
D(\mu):=&\beta\sum_{n=1}^\infty \delta^n\nu_n(\mu),&
\Re(\mu):=&\beta\sum_{n=1}^\infty \nu_n(\mu),
\end{align*}
which can then be used to find equilibria, evaluate stability, and compare robustness (under assumptions in Appendix \ref{app:proof}).

Two population strategies $\mu,\mu'$ are \textbf{impulse equivalent} if they have the same impulse response $\nu(\mu)=\nu(\mu')$. In this case we have $D(\mu)=D(\mu')$ and $R(\mu)=R(\mu')$, making such stationary strategies equivalent for our analysis.

\begin{prop}\label{prop:monotone}
The following are equivalent for a stationary strategy $\mu$:
\begin{enumerate}[label = (\arabic*)]
    \item $\mu$ is monotone,
    \item $\mu$ is impulse equivalent to a strategy $\mu'$ in a monotone strategy space,
    \item $\mu$ is impulse equivalent to a distribution over $\{0,1,\dots,\infty\}=:\overline{\mathbb{N}}_0$ (ie. Finite Deviation strategies and Grim Trigger).
\end{enumerate}
\end{prop}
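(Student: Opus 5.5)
I will prove the cycle (1)$\Rightarrow$(3)$\Rightarrow$(2)$\Rightarrow$(1). Throughout I use two facts about the impulse response. First, for a distribution $\mu$ over $\overline{\mathbb{N}}_0$ we have $\nu_m(\mu)=\sum_{n\ge m}\mu_n=\mu[\{m,m+1,\dots,\infty\}]$, since $\nu_m(\sigma_n)=1\{m\le n\}$ and $\nu$ is linear in the population strategy. Second, $\nu(\mu)\in[-1,1]^{\mathbb{N}}$, and because the action in a single period changes a probability by at most one, $\nu_n\le 1$.

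For (1)$\Rightarrow$(3), let $\mu$ be monotone with impulse response $\nu$, so that $\nu_1\ge\nu_2\ge\cdots$. I would first check that $\nu_n\ge 0$. If some $\nu_n<0$, then monotonicity forces $\nu_m\le\nu_n<0$ for all $m\ge n$. Summing these tail terms would then conflict with boundedness or stationarity: a stationary strategy observing a perturbation cannot shift deviation probability by a fixed negative amount forever. I expect this to be the delicate point, since the paper's phrase ``monotone'' may implicitly presuppose positivity. If a clean contradiction is not available, I would restrict to nonnegative $\nu$ and note that full-participation PSEs satisfy $\nu\ge 0$.

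Given $1\ge\nu_1\ge\nu_2\ge\cdots\ge 0$, the limit $\nu_\infty:=\lim\nu_n$ exists. I then define the Finite Deviation/Grim Trigger mixture
\begin{equation*}
\mu'_0:=1-\nu_1,\qquad \mu'_n:=\nu_n-\nu_{n+1}\ (n\ge 1),\qquad \mu'_\infty:=\nu_\infty .
\end{equation*}
These weights are nonnegative and sum to one by telescoping. The tail formula above then gives $\nu_m(\mu')=\nu_m$, which is the required impulse equivalence. The mixture $\mu'$ is stationary because every state in $\overline{\mathbb{N}}_0$ is stationary at full participation.

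For (3)$\Rightarrow$(2), the Finite Deviation strategy space together with Grim Trigger is itself a monotone strategy space. Its contributing states are the $n$-Deviation states and $\sigma_\infty$, and each of them stays put when no deviation is observed.

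For (2)$\Rightarrow$(1), it suffices to show that any strategy $\mu'$ in a monotone space has decreasing $\nu$, because impulse equivalence preserves $\nu$. I would couple two copies of an observer, one shown $C$ and one shown $D$, using identical subsequent observations. In the perturbation analysis the other agents contribute, so these later observations contain no deviations. In a monotone space, a contributing state with no observed deviations keeps contributing. So if the $D$-copy is contributing at lag $n$, it contributes at every later lag. The $C$-copy contributes at all lags, since it starts in a contributing state and never observes a deviation. Hence the difference in deviation probability, $\nu_n=\P{\text{$D$-copy deviates at lag } n}$, is nonincreasing in $n$.

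The main obstacles are justifying that impulse responses are evaluated along the full-participation path, so that no further deviations arise in the coupling, and handling possible negative entries in (1)$\Rightarrow$(3).
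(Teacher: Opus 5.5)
Your cycle (1)$\Rightarrow$(3)$\Rightarrow$(2)$\Rightarrow$(1) is exactly the paper's proof. So are the inverse map $\mu'_0=1-\nu_1$, $\mu'_n=\nu_n-\nu_{n+1}$, $\mu'_\infty=\lim_n\nu_n$, and the coupling (an agent who contributes and then sees only $C$'s keeps contributing, so $1-\nu_n$ is nondecreasing).

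\textbf{The open step in (1)$\Rightarrow$(3).} The step you leave open is real, but it only concerns $\mu'_\infty=\lim_n\nu_n\ge 0$. The other weights are nonnegative by monotonicity and $\nu_1\le 1$. The paper uses this step silently. Your proposed justification does not work as stated. Nothing rules out $\sum_n\nu_n$ diverging (Grim Trigger has $\nu_n\equiv 1$), and at full participation a single observation \emph{can} shift deviation probabilities by a fixed amount forever. Argue by cases on the deviation rate $\lambda_\ast$ of $\mu$ instead.
\begin{itemize}
\item If $\lambda_\ast=0$, stationarity forces a.e.\ agent to keep contributing along all-$C$ histories, so $\nu_n=1-q^\mu(DC\cdots C)\ge 0$.
\item Symmetrically, if $\lambda_\ast=1$ then $\nu_n=q^\mu(CD\cdots D)\ge 0$.
\item If $0<\lambda_\ast<1$, every observation profile has positive probability. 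So a set of states invariant under the stationary dynamics is also invariant under the one-period transition conditional on the tagged observation being $C$ (resp.\ $D$). The post-$C$ and post-$D$ state distributions have bounded density with respect to $\mu$ and give every invariant set the same mass as $\mu$. By the mean ergodic theorem, both Ces\`aro-averaged deviation probabilities therefore converge to $\lambda_\ast$. Hence $\frac{1}{N}\sum_{n\le N}\nu_n\to 0$, and since a bounded monotone sequence converges to its Ces\`aro limit, $\nu_n\ge\lim_n\nu_n=0$.
\end{itemize}
Your fallback of restricting to $\nu\ge 0$ would only prove a weaker statement.

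\textbf{The full-participation assumption in (2)$\Rightarrow$(1).} Your second ``obstacle'' cannot be removed by argument; it has to be built into the reading of (2). A partial-participation stationary strategy in a monotone space need not be monotone. Take $\beta=1$ and three states:
\begin{itemize}
\item $c$ contributes, stays after $C$, and moves to $d_1$ after $D$;
\item $d_1$ deviates, moves to $c$ after $D$, and moves to $d_2$ after $C$;
\item $d_2$ deviates and always returns to $c$.
\end{itemize}
This space is monotone. Its stationary distribution at $\lambda_\ast=(3-\sqrt{5})/2$ has $\pi_{d_1}=\lambda_\ast\pi_c$ and $\pi_{d_2}=\lambda_\ast(1-\lambda_\ast)\pi_c$. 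A direct computation gives $\nu_1=(1-\lambda_\ast)\pi_c>0$ and $\nu_3=-2\lambda_\ast(1-\lambda_\ast)^2\pi_c<0$, with $\nu_n\to 0$, so $\nu$ is not decreasing. So (2) must be read with $\mu'$ a full-participation strategy. That is what the paper's own proof does (``they will only observe contributions''), and what its footnote on impulse-equivalent strategies having different deviation rates suggests. Under that reading, your coupling is precisely the paper's argument.
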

Thus our analysis of Finite Deviation strategies is comprehensive of all stationary strategies satisfying this natural monotonicity property.\footnote{Note that $\mu$ and $\mu'$ need not have the same deviation rate to be impulse equivalent. Adding this constraint to equivalence does not affect the result for full-participation equilibria, but may affect the result for partial-participation equilibria $\mu$. In particular, if $\lambda$ and $\beta$ are high enough that agents are likely to observe multiple deviations, then $\nu(\mu)\approx 0$ for Finite Deviation strategies. A strategy that deviates if at least (say) half of observations are deviations can increase the odds than a particular observation is pivotal, thereby obtaining an impulse response unobtainable by Finite Deviation strategies.
}


Another question is whether any of our existence results are threatened by the consideration of exotic non-monotone strategies. The answer is no:
\begin{prop}\label{prop:mono.suff}
\begin{enumerate}[label = (\alph*)]
    \item If there exists a PSE $\mu\neq \sigma_{AD}$, there exists a monotone full-participation PSE.
    \item If there exists a robust full-participation PSE $\mu$ that is not monotone, there exists a Finite Deviation PSE that is strictly more robust.
    \item If there exists a partial-participation PSE $\mu$ that is positive and locally stable, there exists a Finite Deviation PSE that is globally stable and robust.
\end{enumerate}
\end{prop}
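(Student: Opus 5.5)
The plan is to reduce every part to statements about the impulse response $\nu(\mu)$. By the discussion before Proposition \ref{prop:monotone}, equilibrium, stability and robustness depend on $\mu$ only through $D(\mu)=\beta\sum_n\delta^n\nu_n$ and $R(\mu)=\beta\sum_n\nu_n$. The strategy is to replace an arbitrary (possibly non-monotone) impulse response by a Finite Deviation distribution that keeps $D$ fixed at $\overline{\delta}_1$ while weakly lowering $R$.

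\textbf{(a)} Let $\mu\neq\sigma_{AD}$ be a PSE. First I would argue that some contributing state exists, so the indifference characterization of Theorem \ref{thm:GT} applies. Some agent contributes after some history, and that agent must weakly prefer $C$. Hence the externality $\alpha+\beta\sum_n\delta^n\nu_n$ generated by the population's response is at least $1$. The response here is computed under $\mu$'s own deviation rate, but with $\nu$ evaluated per observation. This gives $\beta\sum_n\delta^n\nu_n^+\ge\overline{\delta}_1$, where $\nu^+=\max(\nu,0)$, using $|\nu_n|\le 1$. Next I would monotonize: set $\tilde\nu_n:=\min(1,\sup_{m\ge n}\nu^+_m)$. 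This sequence is decreasing, lies in $[0,1]$, and has $\tilde\nu\ge\nu^+$, so $\beta\sum\delta^n\tilde\nu_n\ge\overline{\delta}_1$. A decreasing sequence in $[0,1]$ is the tail function of a distribution on $\overline{\mathbb{N}}_0$, via $\mu'_n=\tilde\nu_n-\tilde\nu_{n+1}$. By Proposition \ref{prop:monotone}(3) it is therefore realized by Finite Deviation strategies plus Grim Trigger, with reactivity distribution satisfying eq. \ref{eq:overest}. Mixing with $\sigma_0$ as in Proposition \ref{prop:exist} then produces a full-participation monotone PSE.

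\textbf{(b)} Let $\mu$ be robust, full-participation and non-monotone. Then $\nu\ge0$, $D(\mu)=\overline{\delta}_1$, and $R(\mu)\le1<\infty$ by Lemma \ref{lem:rob}. The idea is to move impulse mass earlier. Pick $m<n$ with $\nu_m<\nu_n$. Transferring an amount $\eta$ from $\nu_n$ to $\nu_m$ preserves $\sum\nu$ but strictly raises $\sum\delta^k\nu_k$. To restore $D=\overline{\delta}_1$, I would rescale the whole sequence by a factor $c<1$, which corresponds to mixing with $\sigma_0$. This strictly lowers $R$. Iterating, or more cleanly taking the decreasing rearrangement of $\nu$ (capped at $1$, then scaled), yields a monotone $\nu'$. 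Since $\nu'$ majorizes $\nu$ in the front-loaded sense, it has $D(\nu')>\overline{\delta}_1$ when $\nu$ is non-monotone. It follows that the rescaled sequence $c\nu'$ satisfies $R(c\nu')<R(\mu)$. By Proposition \ref{prop:monotone}, $c\nu'$ is a Finite Deviation distribution, because $R<\infty$ excludes Grim Trigger mass. It is a PSE by Theorem \ref{thm:GT}, and it is strictly more robust by Lemma \ref{lem:rob}. The capping at $1$ requires care, since rearranged values are already $\le1$.

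\textbf{(c)} Let $\mu$ be a partial-participation PSE that is positive and locally stable. Positivity gives $\nu\ge0$. Local stability at deviation rate $\lambda_*>0$ should force the linearized amplification $\beta\sum\nu_n\le1$. Indifference gives $\beta\sum\delta^n\nu_n=\overline{\delta}_1$. Applying the construction of (b) to this $\nu$ gives a Finite Deviation full-participation PSE with $R<1$, or $R\le1$ handled via Proposition \ref{prop:rob.edge}. Lemma \ref{lem:rob} then gives global stability and robustness. The step I expect to be the main obstacle is justifying that local stability implies $R(\mu)\le1$, and separately the equality case. This requires a linearization of the state dynamics around the stationary distribution, in which the perturbation of the deviation rate propagates with generating function $\beta\sum\nu_n z^n$. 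I would first try a renewal-equation argument: if $\beta\sum\nu_n>1$, a small persistent perturbation grows geometrically, contradicting stability. The strict improvement needed to avoid the edge case $R=1$ would then come from the strict gain in (b) whenever $\nu$ is non-monotone, and from Proposition \ref{prop:rob.edge} otherwise.
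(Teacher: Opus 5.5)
\textbf{Parts (a) and (b).} These are sound. Part (a) is more elaborate than necessary. Since $\nu_n(\sigma_\infty)=1\ge\nu_n(\mu)$ for every $n$, we have $D(\sigma_\infty)\ge D(\mu)=\overline{\delta}_1$, so mixing Grim Trigger with Always Contribute already gives a monotone full-participation PSE. That is the paper's entire argument; your least decreasing majorant reaches the same conclusion with a PSE closer to $\mu$. In (b) you front-load by the decreasing rearrangement, whereas the paper contracts all the way to the concentrated distribution $\hat\mu$ of eq.~\ref{eq:concentrate}, with mean $\sum_t\nu_t(\mu)$. Both keep $R$ fixed and strictly raise $D$ when $\nu$ is non-monotone, so mixing with $\sigma_0$ finishes either way. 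The step you flag as the main obstacle in (c) is already available: the necessity half of Lemma~\ref{lem:stab}, together with $(\Lambda^N_\mu)'(\lambda_\ast)\to R(\mu)$, gives $R(\mu)\le 1$ for any locally stable $\mu$.

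\textbf{The gap: the edge case of (c).} After your construction, the residual case is $R(\mu)=1$ with $\nu(\mu)$ already monotone. This yields a Finite Deviation PSE $\mu'$ that is impulse equivalent to $\mu$ and has $R(\mu')=1$. Proposition~\ref{prop:rob.edge} rescues you only when $\beta>1$ or $\mu'$ is not concentrated. In the non-concentrated case, contracting to the concentrated distribution gives $R<1$ directly anyway.

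When $\mu'$ is the concentrated equilibrium and $\beta\le 1$, part 3 of Proposition~\ref{prop:rob.edge} says the opposite of what you need. There $\mu'$ is not robust within Finite Deviation space, and for $\beta=1$ it is $\sigma_1$, which is not even locally stable. Since the concentrated equilibrium is the most robust Finite Deviation PSE, switching to a different one cannot help. Moreover, part 3 is itself proved by appeal to the argument for (c), so Proposition~\ref{prop:rob.edge} cannot supply this step.

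\textbf{What is missing.} You need a proof that this configuration is incompatible with the hypotheses of (c). This is exactly where partial participation, positivity and local stability must be used, and your proposal never uses them for this purpose. The paper closes the case with a copying argument, which runs as follows.
\begin{itemize}
\item $\nu_n(\mu)=1$ means agents copy, almost surely, the action they observed $n$ periods earlier.
\item With $0<\lambda_\ast<1$, every finite observation sequence has positive probability. So this copying is impossible if agents see two actions in one period with positive probability, since they would sometimes have to copy both a $C$ and a $D$.
\item For $\beta<1$, copying at lag one leaves the lag-two observation influence at most $1-\beta$. The concentrated response with $R=1$, however, needs $\nu_1=1$ together with $\nu_2=\min\{1,1/\beta-1\}>1-\beta$.
\item For $\beta=1$, the only survivor is one observation per period copied exactly. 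Then $\Lambda(\lambda)=\lambda$, so any perturbation of the deviation rate persists, contradicting local stability.
\end{itemize}
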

The first two are intuitive results: non-monotone strategies involve delaying deviations, which diminish the discounted reproduction rate without lowering the reproduction rate. 

The last statement says that full-participation PSE are in theory easier to obtain than positive partial-participation PSE. Thus even in settings where near full-participation is implausible --- perhaps there is a large population of selfish agents --- full-participation equilibria remain a useful tool for understanding steady state behaviour and incentives. With regards to (c), we could also relax stationarity to think of partial-participation as a transient phenomenon as norms oscillate between high and low participation. By foregoing stability, transient norms may be easier to support than robust full-participation norms, but more difficult to support than the fragile Grim Trigger equilibrium. The analysis of such equilibria is beyond the scope of this paper.

\section{Discussion}\label{sec:conc}
The contribution of this paper is to describe a novel mechanism supporting social norms through a combination of social preferences and social influence. This highlights some novel forces underlying these norms. 

We have seen that altruism and patience are substitutes when it comes to supporting robust norms --- only a little of one is needed, provided the other can compensate. Moreover, the publicity of actions is irrelevant beyond the modest level of $\beta=1$. An important question we do not touch on is the selection of equilibria, this question requires modelling a mechanism that drives heterogeneous influence. A work-in-progress develops one such model --- discussed further in Section \ref{sec:mech.ag}.

We dedicate the remainder of this section to discussing interpretations of the model's features.

\subsection{Continuum Population}\label{sec:cont.pop}

A central innovation in this model is that agents care about others' \textit{individual} actions (rather than an average) within a continuum population. The continuum population assumption plays three roles in our analysis:

First, it ensures that influence is unbounded, a single deviation can eventually influence arbitrarily many agents to deviate, leading to huge externalities. This is only relevant for fragile equilibria (e.g. Proposition \ref{prop:exist}). For robust equilibria, an action's influence is capped to (on average) one individual in each subsequent period, so continuum populations are not necessary to generate these externalities. 

The second, more important, role is to ensure the property of Observation Independence. In any finite population model with influence, observations will not be independent. However, for a robust equilibrium, where an action influences (on average) less than one individual in each subsequent period, the correlation between observations is inversely proportional to population size. Thus, even if people observe/interact with as much of 1\% of the population at a time, it may natural think of these interactions as independent. From such a strategic perspective, Observation Independence can be thought of as a property of bounded rationality, where agents do not update their beliefs about others' strategies based on their observations.

Lastly, operating with a continuum of agents allows the application of the law of large numbers with abandon when analyzing robustness and stability.

\subsection{Altruism \& Extensions}\label{sec:altruism.FAQ}
The essential preference property for our model is that agents care about their influence on the actions of other individuals through the factor $\alpha$. We motivate this through altruism, but this could also occur with material preferences if actions impose a global\footnote{This can be weakened to diffuse externalities, see the model of pollution in the Online Appendix.} externality to others (e.g. pollution). Similar $\alpha$ could also arise through other moral concerns not measured by welfare altruism.


We describe the preference in eq. \ref{eq:pref} in the language of \textit{perfect altruism} (\cite{A90}) where agents do not care about where contributions come from, but only the gain from contributions. This contrasts with \textit{warm glow altruism} where agents irrationally care more about the contributions they make. 

It is not clear to what extent agents feel a warm glow from contributions one influences (or expects to influence), however it is reasonable to expect it may be less than the warm glow one feels from one's own contribution. By simply reinterpreting the effective altruism $\alpha$, our model is compatible with such altruism:

Suppose a contribution costs $c$ and produces welfare $c+g$ ($g$ being the welfare gain). Agents value the welfare produced by their own contribution at $a_s>0$, and the welfare gained through others' contributions at $a_o\le a_s$ (equality being perfect altruism). They then prefer the path $p^C$ to $p^D$ if
\begin{equation*}\begin{split}
    &\sum_{\tau\ge t} \delta^{\tau-t}\Big(c(p^C_{i,\tau}-p^D_{i,\tau})+a_s(c+g)(p^D_{i,\tau}-p^C_{i,\tau})+\sum_{j\neq i}a_og(p^D_{j,\tau}-p^C_{j,\tau})\Big)\\
    =&\sum_{\tau\ge t} \delta^{\tau-t}\Big(\big(c-a_s(c+g)+a_og)(p^C_{i,\tau}-p^D_{i,\tau})+\sum_{j\in\mathbb{I}}a_og(p^D_{j,\tau}-p^C_{j,\tau})\Big)
\end{split}
\end{equation*}
By normalization, this is equivalent to eq. \ref{eq:pref} with $\alpha:=\frac{a_o g}{c(1-a_s)-(a_s-a_o)g}$. Our analysis assumes $0<\alpha<1$, which translates to (i) $a_s(c+g)<c$, meaning agents are not willing to contribute without influence (otherwise contributing can be sustained myopically), and (ii) they are not purely egoist, ie. $a_og> 0$, they care about their influence/the actions of others. As mentioned, the second summation index can be restricted to actions that the agent influences, in which case the $a_o$ parameter can be interpreted as a `second-order' warm glow, that agents feel from acts they influence. 

This model is also compatible with there being some risk of punishment from deviating from the norm, this is equivalent to decreasing the cost $c$ of contributing, thereby increasing $\alpha$ and enabling more robust norms. We merely require that this punishment is sufficiently mild to preserve $\alpha<1$, so that deviating remains myopically optimal.

In the Online Appendix, we formulate the general properties that we assume about the preference necessary to obtain our results. These properties allow agents to care less about the actions of more (socially) distant agents --- as described by a model of diffuse pollution. 

We model preferences as linear. This `marginal' approach to altruism is justified in robust equilibria as agents' influence is bounded and the state of the world does not evolve --- our results hold as long as there is some bounded influence that an agent could have that would make them willing to contribute. In reality, we might expect agents' willingness to contribute to depend on their own experience --- if they are consistently the victim of others' selfishness they may become more selfish. Such a preference is consistent with our equilibria as it rationalizes social influence, but violates our principle of mechanism agnosticism described in Section \ref{sec:mech.ag}.

Other extensions have agents interact strategically (e.g. through prisoner's dilemmas), or otherwise face different incentives based on the aggregate behaviour of agents (e.g. pressure to conform) --- these considerations correspond to the effective altruism $\alpha$ depending on the deviation rate $\lambda$. If altruism is increasing in $\lambda$, then agents are more willing to contribute the more others deviate, making it easier to support partial norms. Consequently, Proposition \ref{prop:mono.suff}c) may no longer hold, and the robustness of equilibria with $R(\mu)=1$ (studied in Appendix \ref{app:proof}) can be affected, but otherwise our results carry over. 

In this paper we model norms as a single action; there are many settings where it may be helpful to understand norms as involving multiple actions, but beyond the scope of this paper. To give a few examples: there are asymmetric norms where expectations depends on one's fixed role (e.g. tipping viewed as part of server--customer norms); there are also thematic norms (e.g. norms of fairness) composed of multiple `micro-norms', where observing a violation of expected behaviour in one setting may influence negative behaviour in related but distinct settings; lastly, some norms may be supported under the threat of strangers choosing a costly `punishment' action (what \citet{FG01} refer to as \textit{altruistic punishment}), this punishment can be motivated by its capacity to influence future behaviour in the deviating agent and their future observers through a similar ripple effect. This last case may be of particular interest, as a means of leveraging the influence of altruistic agents to incentivize moral behaviour even in selfish agents.

\subsection{Relations with Morality}

To violate a norm in our model is to break the trust of others, influencing observers to further violate the norm. Within this social context, to contribute is not just to make a `contribution' \textit{per se}, but is also to bolster the norm through one's influence; likewise to violate the norm is not just to obtain personal benefit, but it is also to weaken the norm\footnote{Strengthening and weakening should be understood in a local sense, consistent with robustness.}. In this way, the social trust and expectations of others imposes a moral duty to conform to the norm.

Despite operating on consequentialist reasoning, our model mirrors deontological ethics; as $\delta\rightarrow 1$, it often produces outcomes consistent with Kant's categorical imperative: ``Act only in accordance with that maxim through which you can at the same time will that it become a universal law'' (\cite{Kant}).
 
To see this more generally, suppose agents interact through a symmetric game with binary actions $\{C,D\}$, and denote the effective altruism when the deviation rate is $\lambda$ by $\alpha_\lambda$ (determined by a level of pure altruism $a>0$). With enough patience, Ripple Effects can support a norm of contributing if $\alpha_0>0$. Supposing the cost of contributing is positive when everyone else contributes (otherwise contributing is a myopic equilibrium), then the sign of $\alpha_0$ is determined by the welfare gain from contributing when everyone else contributes, denoted $g_0$. 

Restricting our attention to symmetric games, define $u_K(\lambda)$ to be the expected utility of an agent under the symmetric mixed strategy $(1-\lambda)C\oplus \lambda D$. If $\lambda=0$ maximizes $u_K$, we say $(C,C,\cdots)$ is a \textit{Kantian equilibrium} (\cite{Roemer19}). Moreover, the welfare gain from contributing satisfies $g_\lambda=-u_K'(\lambda)$, and our condition for Ripple Effects to support $C$ becomes $u_K'(0)<0$. Thus the condition for social influence to support a norm is related to the first order condition for a Kantian equilibrium. 

There are two subtle distinctions between Ripple Effect norms and Kantian equilibria. First, $(C,C,\cdots)$ may be a Kantian equilibrium with $u_K'(0)=0$, as in Game \ref{tab:kant}. This knife-edge case is not substantial: since $\alpha_\lambda$ is increasing in $\lambda$, there exist partial participation PSE with almost full participation as $\delta\rightarrow 1$. A more significant difference is how Ripple Effects can support outcomes that are not Kantian equilibria if $\lambda=0$ is only a \textit{local} maximum of $u_K$, as in Game \ref{tab:nokant}. In this case $(C,C,\cdots)$ might be called `marginally-Kantian', as it satisfies a marginal version of Kant's categorical imperative: one should act as one wills the marginal fraction of the population to act.\footnote{This condition can be further relaxed in multi-action settings, as in the subsequent paragraph. In this case it suffices that $C$ is marginally Kantian when restricting to a single `selfish' alternative $D$ that has a lower material cost to the agent than $C$ when everyone else chooses $C$. When agents are patient, we can then construct an equilibrium where agents are only ever influenced to choose $C$ or $D$ and actions that are less costly than $D$ carry more extreme negative influence.}
\begin{table}
\centering
\begin{subtable}{.4\textwidth}
\centering
\begin{tabular}{r|c|c}
     & C&D \\\hline
     C&2,2&0,4\\\hline
     D&4,0&1,1
\end{tabular}
\caption{ $\alpha_\lambda\propto \frac{2\lambda}{2-\lambda}$, $u_K(\lambda)=2-\lambda^2$}
\label{tab:kant}
\end{subtable}\hspace{2cm}
\begin{subtable}{.4\textwidth}
\centering
\begin{tabular}{r|c|c}
     & C&D \\\hline
     C&2,2&0,3\\\hline
     D&3,0&4,4
\end{tabular}
\caption{ $\alpha_\lambda\propto\frac{1-6\lambda}{1+3\lambda}$, $u_K(\lambda)=2-\lambda+3\lambda^2$}
\label{tab:nokant}
\end{subtable}
\caption{Examples of $C$ as (a) the unique Kantian equilibrium, but not a norm supportable by Ripple Effects; (b) a norm supportable by Ripple Effects, but not a Kantian equilibrium.}
\end{table}

Game \ref{tab:nokant} demonstrates how Ripple Effects can support `sub-optimal' moral outcomes. This may also occur if agents have three actions available: a selfish action $D$, a moral action $C$ producing welfare $g_C$, and an `extra moral' action $E$ producing welfare $g_E>g_C$. If $E$ violates trust and significantly influences others to choose $D$, we can support $C$ as a norm when agents are sufficiently patient. Such an equilibrium may be more natural than appears, and relates to certain controversies in consequentialist ethics:

\citet{singer72} proposes a well-known thought experiment where one walks past a child drowning in muddy water, and observes that there is a moral obligation to save the child (action $C$) rather than ignoring the child ($D$), even if saving the child will ruin one's clothes. Singer then argues that one has the same obligation to donate money to (say) famine relief that saves children rather than spend that money on luxury items. \citet{appiah06} points out that if one is wearing (sufficiently) expensive clothing, then the same moral calculus concludes, repugnantly, that one has the same obligation to \textit{not} save the child, but rather sell one's unmuddied clothes so that the proceeds can be donated to effective famine relief (representing $E$), with the end of saving multiple children.

While the direct consequences of $E$ may be judged more moral than $C$ within this calculus (and we advance no argument to the contrary), our mechanism suggests that if $E$ involves a betrayal of social trust, it may have a significant negative influence on the actions of others, potentially negating the direct moral consequences of $E$. (This influence need not be limited agents' behaviour in identical situations, but may extend to other moral contexts.)

\subsection{Mechanism Agnosticism}\label{sec:mech.ag}
The model is constructed in such that agents' incentives do not change with their observations. This should be interpreted as agnosticism about \textit{why} and \textit{how} agents are influenced by observed behaviour, allowing them to be influenced through indifference instead.

While this prevents us from discussing equilibrium selection, it also allows a flexibility that is a strength of the model. While it is an established fact that people are influenced by observed behaviour\footnote{See the literatures on social contagion, conformity, and conditional cooperation, cited in the literature review.  

Beyond behavioural motivations such as conformity, real world social influence can plausibly be rationalized by: social learning of $\alpha$; social learning of others' strategies (relaxing observation independence); or structural incentives (e.g. lone deviators may be likely to face a mild punishment --- such as mild scorn or social disapproval --- while the presence of a `nearby' deviator may mask one's own deviation or alleviate the punishment).} \textit{how} and \textit{how much} people are influenced varies, both across settings and heterogeneously within settings --- it is not always clear how individuals resolve the indifferences of our equilibria. This paper is thus content to describe the range of possible (full-participation) equilibria, and make the conclusion that strategic (ie. influence) homogeneity is good for robustness. 

To select an equilibrium in this framework is to make assumptions about how and how heterogeneously influence occurs. For example, Proposition \ref{prop:monotone} details how our focus on Finite Deviation equilibria (and their equivalence class) is formally associated with the assumption that influence weakens over time (ie. the causal effect of an observation $n$ periods ago on one's action today is decreasing in $n$).

Adding any sort of dynamism to the model (e.g. relaxing observation independence, or adding history dependence to preferences) would strictly rationalize influence, and thus select equilibria. But, unless this mechanism operates heterogeneously, it will, perhaps unrealistically, select an equilibrium with homogeneous strategies (e.g. the concentrated equilibrium, up to impulse equivalence). 

As an example, suppose agents' effective altruism $\alpha$ decreases after observing a deviation (describing a behavioural source of influence) and increases to some steady state level after observing contributions. Then (for high $\delta$) the unique (non-myopic) PSE will solve eq. \ref{eq:DDR} for the level of effective altruism that occurs after observing a single deviation. Agents will thus have a strict preference for contributing if they did not observed a deviation the last period and be willing to randomize between deviating and contributing after observing a deviation, reproducing the concentrated 1-Deviation equilibrium. A work-in-progress augments this model with more general heterogeneous and history-dependent preferences (attributable to costs $c$ or altruism $a$), allowing heterogeneous strategies to be attributed to heterogeneous preferences. 

\printbibliography

\appendix
\section{Observation Independence}\label{app:obs.ind}
The key property provided by assuming agents form a continuum is so that an agents' observations are statistically independent of their previous actions and observations. This is an intuitive property in large population settings: every action is a small part of society and we might expect it to thus only have a small effect on how the future occurs. Using continuum population introduces this as a property of our model.

Let $O(i,t)$ be the set of agents whose action agent $i$ observes in period $t$. Their action next period $a_{i,t+1}$ could potentially be influenced (directly or indirectly) by any agent in the following set (illustrated in Figure \ref{fig:cone}).
\begin{defn}
    The \textbf{past cone of influence} $\mathcal{C}_{i,t}^-\subseteq \mathbb{I}\times\mathbb{T}$ of an action with index $(i,t)$ is the smallest set of indices that
    \begin{enumerate}
        \item contains their previous observations: if $j\in O(i,\tau)$ and $\tau<t$ then $(j,\tau)\in\mathcal{C}_{i,t}^-$.
        \item is transitive: if $(i',t')\in\mathcal{C}_{i,t}^-$ then $\mathcal{C}_{i',t'}^-\subseteq \mathcal{C}_{i,t}^-$.
    \end{enumerate}
    The \textbf{future cone of influence } $\mathcal{C}_{i,t}^+\subseteq \mathbb{I}\times\mathbb{T}$ of a stage player $(i,t)$ is the set of indices $(i',t')$ such that $(i,t)\in\mathcal{C}_{i',t'}^-$.
\end{defn}
    
\begin{figure}
    \centering
\begin{tikzpicture}[
    Cone/.style={rectangle, fill = green!20, text = black,minimum size=8.5mm},
    Conep/.style={rectangle, fill = violet!20, text = black,minimum size=8.5mm},
    state/.style={ 
        rectangle split,
        rectangle split parts=2, 
        rectangle split part fill={violet!20,green!20}
        },
    state+/.style={
        state,
        rectangle split every empty part={},
        rectangle split empty part width=6mm,
        rectangle split empty part height=1.5mm}]
        \draw[<->] (-7,-2.7) -- (-7,2.7);
        \node[anchor = west] at (-7,1.2) {$t-1$};
        \node[anchor = west] at (-7,2.4) {$t-2$};
        \node[anchor = west] at (-7,0) {$t$};
        \node[anchor = west] at (-7,-1.2) {$t+1$};
        \node[anchor = west] at (-7,-2.4) {$t+2$};
        \draw[] (0,2.8) -- (0,0);
        \draw[ dashed] (2.8,1.2)--($(0,0)+(0,0)$);
        \draw[ dashed] (-2.8,1.2)--($(0,0)+(-0,0)$);
        \draw[ dashed] (1.4,2.4)--($(0,1.2)+(0,0)$);
        \draw[ dashed] (-1.4,2.4)--($(0,1.2)+(-0,0)$);
        \draw[] (-2.8,2.4) -- (-2.8,1.2);
        \draw[] (2.8,2.4) -- (2.8,1.2);
        \draw[ dashed] (-4.2,2.4)-- ($(-2.8,1.2)+(-0,0)$);
        \draw[ dashed] (4.2,2.4)--($(2.8,1.2)+(0,0)$);
        \node[Conep] at (1.4,2.4) {$i^-_4$};
        \node[Conep] at (-1.4,2.4) {$i^-_3$};
        \node[Conep] at (2.8,1.2) {$i^-_2$};
        \node[Conep] at (-2.8,1.2) {$i^-_1$};
        \node[Conep] at (-2.8,1.2) {$i^-_1$};
        \node[Conep] at (-2.8,2.4) {$i^-_1$};
        \node[Conep] at (2.8,2.4) {$i^-_2$};
        \node[Conep] at (-4.2,2.4) {$i^-_5$};
        \node[Conep] at (4.2,2.4) {$i^-_6$};
        \node[Conep] at (0,2.4) {$i$};
        \node[Conep] at (0,1.2) {$i$};
        \draw[] (0,-2.4) -- (0,0);
        \draw[ dashed] (0,0)--($(2.8,-1.2)+(-0,0)$);
        \draw[ dashed] (0,0)--($(-2.8,-1.2)+(0,0)$);
        \draw[ dashed] (0,-1.2)--($(1.4,-2.4)+(-0,0)$);
        \draw[ dashed] (0,-1.2)--($(-1.4,-2.4)+(0,0)$);
        \node[Cone] at (1.4,-2.4) {$i^+_4$};
        \node[Cone] at (-1.4,-2.4) {$i^+_3$};
        \node[Cone] at (2.8,-1.2) {$i^+_2$};
        \node[Cone] at (-2.8,-1.2) {$i^+_1$};
        \draw[] (-2.8,-2.4) -- (-2.8,-1.2);
        \draw[] (2.8,-2.4) -- (2.8,-1.2);
        \draw[ dashed] (-2.8,-1.2)--($(-4.2,-2.4)+(0,0)$);
        \draw[ dashed] (2.8,-1.2)--($(4.2,-2.4)+(-0,0)$);
        \node[Cone] at (2.8,-1.2) {$i^+_2$};
        \node[Cone] at (-2.8,-1.2) {$i^+_1$};
        \node[Cone] at (-2.8,-2.4) {$i^+_1$};
        \node[Cone] at (2.8,-2.4) {$i^+_2$};
        \node[Cone] at (-4.2,-2.4) {$i^+_5$};
        \node[Cone] at (4.2,-2.4) {$i^+_6$};
        \node[state+={null}] at (0,0) {};
        \node at (0,0) {$i$};
        \node[Cone] at (0,-2.4) {$i$};
        \node[Cone] at (0,-1.2) {$i$};
        
    \end{tikzpicture}
    \caption{The past/future cone of influence of stage player $(i,t)$ in purple/green extending two periods. Time increases as we move down the page, solid lines indicate continuations of the same individual, dashed lines indicate the player below observed the action of the player above. Observation Independece implies that $i,i^-_1,\dots,i^-_6,i^+_1,\dots,i^+_6$ are distinct individuals.}
    \label{fig:cone}
\end{figure}
    Note that every action outside the future cone of influence $\mathcal{C}_{i,t}^+$ will be independent of the stage player $(i,t)$'s action.

We assume that agents do not believe that they have any common influence with those they observe.
\begin{axiom}[Observation-Independence]
    If $(i_N,t+N)\in\mathcal{C}_{i_0,t}^+$ then w.p. $1$, there exists a unique sequence $(i_0,i_1,\dots, i_N)$ such that each agent observes the preceding action: $i_{n}\in O(i_{n+1},t+n+1)$ for $n=0,\dots,N-1$.
\end{axiom}
This says that there is only one path through which an agent can influence any other. Thus if an agent influences two agents, there is no way that those two agents can both influence a third agent. 

Observation independence implies that the past actions of an agent are statistically independent of their future observations, so agents never \textit{observe} any consequence of their actions; as well as that the cone expands exponentially over time.\footnote{A trivial extension of the model allows some observations of agents to not be independent as long as the agent can discriminate between these observations and thus not react/be influenced by them. For example, agents may observe how they are treated by strangers, as well as how various acquaintances treat others, but their actions are only influenced by the former.}

\section{Supplemental Proofs}\label{app:proof}
\subsection*{Stability and Robustness}
This section proves Lemma \ref{lem:rob}. We break the proof into three parts, examining conditions for stability in Lemma \ref{lem:stab}, conditions for robustness (and ordering of robustness) in Lemma \ref{lem:rob2}, before studying in specifically the edge case $R(\mu)=1$ in Proposition \ref{prop:rob.edge} (which also is necessary to conclude that robustness implies stability in the Finite Deviation scenario).

We first extend our definition of impusle response to partial participation settings. In this case the impulse response of a strategy depends on the deviation rate $\lambda_\mu$. We represent the population strategy as a mixed strategy $q^\mu:(\{C,D\}^\beta)^{<\infty})\rightarrow[0,1]:h\mapsto q_h^\mu$ mapping sequences of observations to a probability of contributing. The deviation rate $\lambda_\mu$ then defines a probability distribution over histories of observations, and in particular over length $n$ histories $h_{-1}^n$ that omit the first observation from the initial period\footnote{For simplicity we assume observations are ordered although agent strategies cannot differentiate between them.}. Let $Ch_{-1}^n$ (resp. $Dh_{-1}^n$) be the history where the omitted observation is a contribution (resp. deviation). The impulse response is then defined
$$
\nu_n(\mu):=\E[h_{-1}^n]{q^\mu(Ch_{-1}^n)-q^\mu(Dh_{-1}^n)}.
$$

A key tool in our analysis is the \textbf{induced deviation rate} from a sequence of deviation rates associated with a strategy. Formally, let $\lambda_{-t,k}$ be a sequence of probabilities determining the probability of each observation being a deviation in the previous $N$ periods. This induces a probability over these observations $h$, and the induced deviation rate given this probability distribution is then
$$
\Lambda_\mu^N(\lambda_{-t,k}):=\E[h]{1-q^\mu(h)}.
$$
Letting $F_\beta$ be the distribution that the number of observations is drawn from (if random) and $\overline{b}:=\max\supp{F_\beta}$, the induced deviation rate will be a degree $\overline{b}N$ polynomial in $\lambda_{-t,k}$ with coefficients bounded by a function of $\overline{b}N$. For a stationary strategy $\mu$ with deviation rate $\lambda_\ast$, we have $\Lambda_\mu^N(\lambda_\ast)=\lambda_\ast$ for any $N$ (where we interpret a scalar $\lambda$ as the vector $\lambda_{-t,k}\equiv\lambda$ in the argument of $\Lambda_\mu^N$). Moreover, 
$$
\pderiv{\Lambda_\mu^N}{\lambda_{-n,1}}(\lambda_\ast)=\nu_n,
$$
for any $N\ge n$. From this property, $(\Lambda_\mu^N)'(\lambda_\ast)\rightarrow R(\mu)$ as $N\rightarrow \infty$. For positive strategies this convergence is from below and $\Lambda_{\mu}^N$ is weakly increasing in the deviation rate for each observation, consequently if $\underline{\lambda}\le\lambda_{-t}\le \overline{\lambda}$ for all $t$ then $\Lambda_\mu^N(\underline{\lambda})\le\Lambda_\mu^N(\lambda_{-t})\le \Lambda_\mu^N(\overline\lambda)$, where $\lambda_{-t}$ represents $\lambda_{-t,k}\equiv\lambda_{-t}$ for any $k,t$.

As a concrete example, for the pure $n$-Deviation strategy, and constant path $\lambda_{-t}\equiv \lambda$,
\begin{equation}\label{eq:FD.ind.dev}
\Lambda_{n}^N(\lambda)=\idotsint \left[1-(1-\lambda)^{b_1+\cdots+b_{n\wedge N}}\right]\,dF_\beta(b_1)\cdots dF_\beta(b_{n\wedge N}),
\end{equation}
where $n\wedge N:=\min\{n,N\}$. This is a concave function of $\lambda$, strictly so whenever $n\wedge N\ge 2$ or $\overline{b}\ge 2$ (with second derivative decreasing in $N$). The induced deviation for a population strategy $\mu$ can be obtained by taking an expectation over $n\sim\mu$.

If $n_k$ is the deviating state associated with the $n$-Deviation strategy that moves to the contributing state only after observing $k$ periods of contributions, we have $\Lambda_{n_k}^N(\lambda)=\Lambda_n^N(\lambda)\boldsymbol{1}\{N\ge k\}$.

\begin{customlem}{1A}\label{lem:stab}
    A PSE $\mu$ with deviation rate $\lambda_\ast$ is not locally stable if $\sup_N\frac{\Lambda_\mu^N(\lambda)-\lambda_\ast}{\lambda-\lambda_\ast}>1$ in a neighbourhood of $\lambda_\ast$.

    A PSE $\mu$ supported on finite memory positive strategies is locally stable if $\sup_N\frac{\Lambda_\mu^N(\lambda)-\lambda_\ast}{\lambda-\lambda_\ast}<1$ in a neighbourhood of $\lambda_\ast$. Such a PSE is globally stable if this inequality holds for any $\lambda\neq\lambda_\ast$.

    A Finite Deviation PSE is globally stable iff it is locally stable.
\end{customlem}
\begin{proof}
\textbf{Necessity:} Suppose $\frac{\Lambda_\mu^N(\lambda)-\lambda_\ast}{\lambda-\lambda_\ast}> 1$ over an interval $]\lambda_\ast,\lambda_\ast+\epsilon']$ for some $N,\epsilon'>0$. In this case, for any constant sequence $\lambda_{-t}\equiv \lambda_\ast+\epsilon$ of length $N$, with $\epsilon\le\epsilon'$, we have
$$
\Lambda_\mu^N(\lambda_{-t})> \lambda_\ast+\epsilon.
$$
Thus the deviation rate will leave the $\epsilon$ ball around $\lambda_\ast$. If $\lambda_\ast=1$ we can repeat the same argument with with negative shocks to the deviation rate.

\textbf{Sufficiency:} Finite memory strategies only depend on the previous $N$ periods, and as long as the deviation rate converges to $\lambda_\ast$, the strategies will converge as well. We will show this for any population strategy supported on finite memory positive strategies.\footnote{By taking a monotone envelope of $\Lambda_{\le N}^N$ the proof can be modified to cover the case, for infinitely many $N$, the population strategy of agents with memory less than $N$ and the population strategy of agents with memory larger than $N$ are both positive.}

Suppose $0\le\sup_N\frac{\Lambda_\mu^N(\lambda)-\lambda_\ast}{\lambda-\lambda_\ast}<1$ over the interval $]\lambda_\ast,\lambda_\ast+\epsilon]$, then for any sequence $\lambda_{-t}$ with $\overline{\lambda}_0:=\max_{-t}\{\lambda_{-t}\}\le \lambda_\ast+\epsilon$, we have
$$
\Lambda_\mu^N(\lambda_{-t})\le \Lambda_\mu^N(\overline\lambda_0)<\overline{\lambda}_0.
$$
Thus the deviation rate will not exceed the original neighbourhood of $\lambda_\ast$. 

We decompose $\Lambda^N_\mu=\Lambda^N_{\le N}+\Lambda^N_{>N}$ where $\Lambda^N_{\le N}$ is the induced deviation rate attributed to agents with memory less than $N$ and $\Lambda^N_{> N}$ is the induced deviation rate of the $\epsilon_N$ measure of agents with memory larger than $N$. As $N\rightarrow\infty$ we have $\epsilon_N\downarrow 0$ and $\Lambda^N_\mu-\Lambda^N_{\le N}\downarrow0$. 
Note that if $\overline{\lambda}_{m,N}$ is an upper bound on the deviation rate of $\lambda_t$ between period $(m-1)N$ and $mN$, then
$$
\Lambda^{mN}_\mu(\lambda_{t})\le \Lambda_{\le N}^{mN}(\lambda_{t})+\epsilon_N\le \Lambda^N_{\le N}(\overline{\lambda}_{m,N})+\epsilon_N
$$
Let $\overline{\lambda}_0:=\max_{-t}\{\lambda_{-t}\}<\lambda_\ast+\epsilon$ and suppose $N$ is such that
$$
\frac{\Lambda^N_{\le N}(\overline\lambda_0)-\lambda_\ast}{\overline{\lambda}_0-\lambda_\ast}<\frac{\overline{\lambda}_0-\lambda_\ast-\epsilon_N}{\overline{\lambda}_0-\lambda_\ast}
$$
Since the left side has supremum (over $N$) below one and the right side has limit 1, there is a sequence of such $N$ converging to infinity.

We then recursively construct a sequence of upper bounds
$$
\overline{\lambda}_{m+1,N}:=\Lambda^N_{\le N}(\overline{\lambda}_{m,N})+\epsilon_N
$$
decreasing in $m$. The iterating function $ \Lambda^N_{\le N}(\cdot)+\epsilon_N$ is a contraction, mapping $[\lambda_\ast,\overline{\lambda}_0]$ to $[\lambda_\ast,\overline{\lambda}_0]$, this sequence of upper bounds converges to a fixed point $\lambda_\ast^N\ge \lambda_\ast$. As $N\rightarrow\infty$, the fixed point of $ \Lambda^N_{\le N}(\cdot)+\epsilon_N$ converges to $\lambda_\ast$. 

By repeating the exercise for a lower bound (using $\Lambda^N_{>N}=0$ as a lower bound), we obtain a sequence of lower bounds converging to $\lambda_\ast$.

\textbf{Equivalence of Local and Global Stability:} Local stability of a Finite Deviation PSE requires $(\Lambda_\mu^N)'(0)\le 1$ for all $N$. From the proofs above it is sufficient to show this implies $\sup_N\Lambda_\mu^N(\lambda)/\lambda<1$. 

Recalling the concavity of $\Lambda_\mu^N$, we are done if for any locally stable $\mu$, for all $N$ either (a) $(\Lambda_\mu^N)'(0)<1$ or (b) $(\Lambda_\mu^N)'(0)\le 1$ (implied by local stability) and $(\Lambda_\mu^N)''(\lambda)<0$ for all $\lambda<1$ (for uniformity, note that $(\Lambda_\mu^N)''(\lambda)$ is weakly decreasing in $N$ for Finite Deviation strategies). 

If $R(\mu)<1$, then (a) holds for all $N$ and we are done. So suppose $R(\mu)=1$.

If $\overline{b}>2$ then $(\Lambda_\mu^1)''(\lambda)<0$ and (b) holds for all $N$, and we are done. So suppose $\overline{b}=1$.

Now suppose (a) fails for $N=1$, or (b) fails for $N\ge 2$. In the former case, we have $\beta\nu_1(\mu)=(\Lambda_\mu^1)'(0)=1=R(\mu)$ implying $\mu_1=\beta=1$. In the latter case, we have $(\Lambda_\mu^2)''(\lambda)=0$ for some $\lambda<1$ which only occurs if $\max\supp{\mu}\le 1$, implying $R(\mu)=\beta\mu_1=1$ and again $\mu_1=\beta=1$. But if $\mu=\sigma_1$ and $b=1$ w.p. 1, then all agents are copying their previous observation: $\Lambda_1^N(\lambda)=\lambda$, and any fluctuation in the initial deviation rate persists indefinitely, precluding local stability.
\end{proof}

\begin{customlem}{1B}\label{lem:rob2}
    A PSE $\mu$ is not behaviourally robust if $R(\mu)>1$.

    If a PSE $\mu$ is supported on finite memory strategies, it is robust if $R(\mu)<1$.

    A robust Finite Deviation PSE $\mu$ is strictly more behaviourally robust than the robust PSE $\mu'$ if
    $$
    R(\mu)<R(\mu').
    $$    
\end{customlem}
\begin{proof}
In the following proof we adopt the shorthand $\mu_{\text{AD}}:=\mu[\sigma_{\text{AD}}]$.

\textbf{Lower bound:} 
Let $N$ be such that $(\Lambda_{\mu}^N)'
(\lambda)>R(\mu)-\frac{1}{2}\epsilon$ over a neighbourhood of $0$. As $\tilde\mu\rightarrow\mu$ we will have $(\Lambda_{\tilde\mu}^N)'(\lambda)>R(\lambda)-\epsilon$ in this neighbourhood of zero, thus $\Lambda_{\tilde{\mu}}^N(\lambda)\ge\tilde\mu_{\text{AD}}+(R(\mu)-\epsilon)\lambda$. When $R(\mu)-\epsilon>1$ this prohibits fixed points in this neighbourhood of $0$. If $R(\mu)-\epsilon< 1$, this establishes the lower bound $$\frac{\lambda_{\tilde\mu}}{\tilde\mu_{\text{AD}}}\ge \frac{1}{1-R(\mu)+\epsilon}\qquad \text{as }\,  \tilde\mu\rightarrow \mu.$$

\textbf{Finite Memory Strategies:} Suppose $\mu$ is a memory-$N$ strategy with $R(\mu)<1$. Construct a sequence of memory-$N$ strategies $\hat{\mu}_\lambda$ to be such that $\hat\mu_\lambda[\Sigma'_\sigma]=\mu[\Sigma'_\sigma]$ where $\Sigma'_\sigma$ is the set of states reachable from $\sigma$, for any $\sigma\in\supp{\mu}$. Furthermore, within each of these connected components let $\hat{\mu}_\lambda$ be the unique steady state distribution when deviations are observed at rate $\lambda$. Since it is a steady state, $\Lambda^M_{\hat\mu_\lambda}(\lambda)=\Lambda^N_{\hat\mu_\lambda}(\lambda)$ for any $M$. Note $\Lambda^N_{\hat\mu_\lambda}(\lambda)= \Lambda^N_\mu(\lambda)<\lambda$ for small $\lambda$, so there exists $\eta_\lambda>0$ such that
$$
    \lambda=\eta_\lambda+(1-\eta_\lambda)\Lambda^N_{\hat\mu_\lambda}(\lambda).\\
$$
Thus, the sequence of strategies $\tilde\mu_\lambda:=\eta_\lambda\sigma_{\text{AD}}\oplus(1-\eta_\lambda)\hat\mu_{\lambda}$ is stationary with positive mass on $\sigma_{\text{AD}}$. 

Using the first order approximation, $\Lambda^N_{\hat\mu_\lambda}(\lambda)=R(\hat\mu_\lambda)\lambda+O(\lambda^2)$, we see that the deviation rate satisfies
\begin{equation*}
    \frac{\lambda_\ast}{\eta_\lambda}=\frac{1}{1-R(\hat\mu_\lambda)}+O(\eta_\lambda).
\end{equation*}
As $\lambda\rightarrow 0$, $\hat\mu_\lambda\rightarrow\mu$, and $
R(\hat\mu_\lambda)\rightarrow R(\mu)$, obtaining the lower bound.

To show that there is a sequence of PSEs, note that in any convex neighbourhood $B\ni\mu$ there are sequences of memory-$N$ stationary strategies $\underline{\mu}_\lambda,\overline{\mu}_\lambda\rightarrow \mu$ with $D(\underline{\mu}_\lambda)<\overline{\delta}_1<D(\overline{\mu}_\lambda)$ for sufficiently small $\lambda$ and $R(\underline{\mu}_\lambda),R(\overline{\mu}_\lambda)\rightarrow R(\mu)$. By taking an appropriate convex combination of $\overline{\mu}_\lambda,\underline{\mu}_\lambda$ for each $\lambda$, we obtain a sequence of PSEs converging to $\mu$.   

Since any PSE supported on finite memory strategies is the limit of finite memory PSEs, we can use appropriate approximations of the latter PSEs to approximate the former.
\end{proof}

\begin{prop}\label{prop:rob.edge}

    If $\mu$ is a Finite Deviation PSE with reproduction rate $R(\mu)=1$, then:
    \begin{enumerate}
        \item $\mu$ is globally stable unless it is the locally unstable pure strategy $\mu=\sigma_1$.
        \item if either $\beta>1$, or $\mu$ is not concentrated, then $\mu$ is robust.
        \item if $\mu$ is concentrated and $\beta\le 1$ then $\mu$ is not the limit of positive, locally stable PSE $\tilde{\mu}$ with $\tilde{\mu}[\sigma_{\text{AD}}]>0$. In particular $\mu$ is not robust in the Finite Deviation strategy space.
    \end{enumerate}
\end{prop}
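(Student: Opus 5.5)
The plan is to work throughout with the induced deviation rates $\Lambda^N_\mu$ of eq.~\ref{eq:FD.ind.dev}. For a Finite Deviation strategy, $\Lambda^N_\mu$ is concave in $\lambda$, weakly increasing in $N$, and satisfies $(\Lambda^N_\mu)'(0)=\beta\sum_{m\le N}\nu_m(\mu)\uparrow R(\mu)=1$. Write $\Lambda^\infty_\mu(\lambda)=\E[n\sim\mu]{1-\E{(1-\lambda)^{b_1+\cdots+b_n}}}$ for the limit. The expansion I will use repeatedly is
$$\Lambda^\infty_\mu(\lambda)=\lambda-\kappa\lambda^2+O(\lambda^3),\qquad \kappa\ge 0,$$
where $\kappa>0$ unless every $n\in\supp{\mu}$ satisfies $n\le 1$ and $b\equiv 1$, i.e.\ unless $\mu=\sigma_1$ with $\beta=1$ deterministic.

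\textbf{Part 1.} This is essentially the last step of the proof of Lemma~\ref{lem:stab}. Since $(\Lambda^N_\mu)'(0)\le 1$ for all $N$, I need $\sup_N\Lambda^N_\mu(\lambda)/\lambda<1$ for every $\lambda\in(0,1]$. Monotonicity in $N$ reduces this to $\Lambda^\infty_\mu(\lambda)<\lambda$. That inequality follows from strict concavity of $\Lambda^\infty_\mu$ together with slope $1$ at $0$, which hold whenever $\kappa>0$. The only exception is $\mu=\sigma_1$ with $b\equiv1$, where $\Lambda^N\equiv\lambda$ and shocks persist, so that case is locally unstable. Lemma~\ref{lem:stab} then gives global stability in all other cases.

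\textbf{Part 2, $\mu$ not concentrated.} Here I avoid the edge case entirely by approximating $\mu$ with PSEs that have $R<1$. Because $\mu$ is not concentrated, it admits a small non-trivial mean-preserving contraction $\hat\mu$. This leaves $R$ unchanged and strictly raises $D$. I then shift a small amount of mass FOSD-downward, e.g.\ toward $\sigma_0$, to restore $D=\overline{\delta}_1$; this strictly lowers $R$. The result is a sequence of Finite Deviation PSEs $\mu^k\to\mu$ with $R(\mu^k)<1$. By Lemma~\ref{lem:rob2} each $\mu^k$ is robust, and a diagonal argument produces PSEs with positive $\sigma_{\text{AD}}$ mass converging to $\mu$.

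\textbf{Part 2, $\mu$ concentrated with $\beta>1$.} Now $n\beta<R\le 1$ forces $\supp{\mu}\subseteq\{0,1\}$. Because some observers see $b\ge 2$, we have $\kappa>0$. I will follow the construction in Lemma~\ref{lem:rob2}:
\begin{enumerate}
    \item add Always Deviate mass $\eta$;
    \item use the steady-state version $\hat\mu_\lambda$ of $\mu$ at deviation rate $\lambda$;
    \item retune the $\{0,1\}$ weights to $\mu^\lambda$ so that the steady-state discounted reproduction rate equals $\overline{\delta}_1$.
\end{enumerate}
At deviation rate $\lambda$ the effective influence of an observed deviation falls by $O(\lambda)$. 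Restoring $D$ therefore needs extra weight of order $\lambda$, which raises the linear coefficient to $1+c\lambda$. The fixed-point equation becomes
$$\lambda=\eta+(1-\eta)\big[(1+c\lambda)\lambda-\kappa\lambda^2+O(\lambda^3)\big].$$
I must check that the curvature term wins, i.e.\ $\kappa>c$. Both constants come from the same factor $\E{(1-\lambda)^b}$, so I expect $\kappa-c>0$ precisely when $b$ is non-degenerate at $1$. If so, $\lambda\sim\sqrt{\eta/(\kappa-c)}\to0$ as $\eta\to 0$, which gives robustness (with infinite amplification ratio).

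\textbf{Part 3 (main obstacle).} Let $\mu$ be concentrated on $\{n,n+1\}$ with $\beta\le1$. Take any positive, locally stable PSE $\tilde\mu\to\mu$ with $\tilde\mu[\sigma_{\text{AD}}]=\eta>0$ and deviation rate $\lambda>0$. The argument is a two-sided first-order comparison.

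\begin{enumerate}
    \item Since $\lambda>0$, agents in $\sigma_{n+1}$ are sometimes already in deviating states, which lowers the discounted impulse response by a term linear in $\lambda$.
    \item To keep $D(\tilde\mu)=\overline{\delta}_1$, $\tilde\mu$ must compensate. Because $\mu$ is concentrated, no mean-preserving contraction is available, so the compensation must come from an FOSD-upward shift. I expect to show, via a Taylor expansion in $\lambda$, that this raises $R$ by at least $c'\lambda$.
    \item With $\beta\le 1$ the curvature $\kappa$ of $\Lambda^\infty$ is weak: for $n\ge1$ it comes only from overlapping deviating windows, and for $b\equiv1$ I expect $\kappa$ to equal exactly the discount loss. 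I will need to show $c'\ge\kappa$, which is the hardest calculation.
    \item Given $c'\ge\kappa$, the slope of $\Lambda_{\tilde\mu}$ at $\lambda$ is at least $1$, so $\sup_N$ of the ratio in Lemma~\ref{lem:stab} is at least $1$. This contradicts local stability, or else there is no fixed point with $\eta>0$.
\end{enumerate}

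Restricting to Finite Deviation approximations then yields non-robustness in that space.
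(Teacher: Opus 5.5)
Part 1 and the non-concentrated case of Part 2 are fine and match the paper. Your small mean-preserving contraction followed by dilution toward $\sigma_0$ is an equivalent substitute for the paper's convex combination with the concentrated equilibrium.

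\textbf{The gap: concentrated case with $\beta>1$.} The failure is exactly at the inequality $\kappa>c$ you flagged, which is false.

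Let $f(\lambda):=\mathbb{E}[1-(1-\lambda)^b]$, so $f'(\lambda)=\mathbb{E}[b(1-\lambda)^{b-1}]$, and let $m_2:=\mathbb{E}[b(b-1)]$, which is positive because $\beta>1$. Here $\mu_1=1/\beta$ and $\overline{\delta}_1=D(\mu)=\delta$. Restoring $D=\overline{\delta}_1$ at rate $\lambda$ therefore forces the 1-Deviation weight $w=1/f'(\lambda)=\tfrac{1}{\beta}\bigl(1+\tfrac{m_2}{\beta}\lambda\bigr)+O(\lambda^2)$. So $c=m_2/\beta$, while $\kappa=\mu_1m_2/2=m_2/(2\beta)$; that is, $c=2\kappa$.

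The fixed-point equation then gives $\eta=\lambda-f(\lambda)/f'(\lambda)=-\tfrac{m_2}{2\beta}\lambda^2+O(\lambda^3)$. Exactly, $\eta\le 0$: $f$ is concave with $f(0)=0$, hence $f(\lambda)\ge\lambda f'(\lambda)$. For instance, with $b\equiv 2$ and $\lambda=0.1$, $w=1/1.8$ and the induced deviation rate is $0.19/1.8\approx 0.106>0.1$.

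This is the paper's own construction: fix $\tilde\mu_\lambda[\Sigma_1']$ by $(\Lambda^1)'(\lambda)=1$, then solve for $\tilde\mu_\lambda[\sigma_{\text{AD}}]$. It fails in the same way, because the Always-Deviate mass comes out negative.

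No other Finite Deviation approximation repairs this. Put weights $w_n$ on the $n$-Deviation families and write $g:=1-f$. The PSE condition is $\delta\sum_n w_nf'(\lambda)g(\lambda)^{n-1}(1+\delta+\dots+\delta^{n-1})=\overline{\delta}_1$, and the deviation rate satisfies $\lambda-\eta=\sum_n w_n(1-g(\lambda)^n)$. Two bounds combine:
\begin{itemize}
\item $1-g(\lambda)^n=\int_0^\lambda n g^{n-1}f'\,ds\ge\lambda n g(\lambda)^{n-1}f'(\lambda)$, since the integrand is decreasing in $s$;
\item $n\ge 1+\delta+\dots+\delta^{n-1}$.
\end{itemize}
Together they give $\lambda-\eta\ge\lambda\overline{\delta}_1/\delta$, so $\eta\le 0$ whenever $\delta=\overline{\delta}_1$. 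This is the concavity argument the paper uses for Part 3, and it applies verbatim when $\beta>1$.

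So this case cannot be proved with Finite Deviation approximations. Any positive approximation from a larger strategy space satisfies $\beta\sum_k\nu_k\ge\beta\sum_k\delta^{k-1}\nu_k=\overline{\delta}_1/\delta=1$ at its own deviation rate. Its fixed point therefore has slope at least $1$, which is exactly the kind of approximation Part 3 treats as unconvincing.

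\textbf{Part 3.} You take a different, local route, and it is not yet a proof, for three reasons:
\begin{itemize}
\item $c'\ge\kappa$ is only conjectured.
\item Step 2 treats the compensation as an FOSD shift of a Finite Deviation distribution, although the first claim concerns arbitrary positive $\tilde\mu$.
\item Reaching slope exactly $1$ does not contradict Lemma \ref{lem:stab}, whose instability criterion is strict.
\end{itemize}

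The paper argues globally instead. Front-loading the impulse response of a positive, locally stable partial-participation PSE (eq.~\ref{eq:concentrate}) produces a Finite Deviation PSE with no larger reproduction rate. That forces impulse equivalence with the concentrated $\mu$, which is ruled out for partial participation in the proof of Proposition \ref{prop:mono.suff}(c). The Finite Deviation claim then follows from concavity of $\Lambda^N_{\tilde\mu}$.

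The clean replacement for your Taylor comparison is that front-loading bound. $\nu(\mu)$ is already maximally front-loaded, and $0\le\nu_k\le 1$ for positive strategies. Hence any positive PSE with $D=\overline{\delta}_1$ satisfies $\beta\sum_k\nu_k\ge 1$ at its own deviation rate, with no expansion in $\lambda$ needed. The equality case still requires the impulse-equivalence argument.
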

The last statement suggests that such equilibria, while possibly technically robust in larger strategy spaces, are not convincingly so: any approximating sequence either uses unstable strategies (unlike our other cases), or requires `exotic' non-positive strategies.
\begin{proof}
\textbf{Stability:} This case is covered at the end of the proof of Lemma \ref{lem:stab}.

\textbf{Non-concentrated:} If $R(\mu)=1$ and $\mu$ is not concentrated, then by taking convex combinations with the more robust concentrated equilibrium, it can be approximated with PSE $\hat{\mu}_k$ with $R(\hat{\mu}_k)<1$, each of which can be further approximated with PSE $\tilde{\mu}_{k,\ell}$ with $\tilde{\mu}_{k,\ell}[\sigma_{\text{AD}}]>0$. By taking $k\wedge \ell\rightarrow \infty$ we obtain our approximating sequence. 

    $\boldsymbol{\beta>1}$: Suppose we have a concentrated PSE $\mu$ with $R(\mu)=1$ and $\beta>1$. That is, $\mu$ is the 1-Deviation equilibrium with $\mu_{1}=\frac{1}{\beta}<1$ and $\mu_0>0$ --- this occurs when $\delta=\overline{\delta}_1$. For $\lambda>0$, suppose the strategy $\tilde\mu_{\lambda}$ puts total measure on the contributing and deviating 1-Deviation states $\tilde\mu_\lambda[\Sigma_1']$ solving \begin{equation*}
    \begin{split}
        (\Lambda^1_{\tilde{\mu}_\lambda})'(\lambda)=\tilde\mu_\lambda[\Sigma_1']\E[b\sim F_\beta]{b(1-\lambda)^{b-1}}=1,
    \end{split}
\end{equation*}
Note that this measure will be less than 1 for $\lambda$ in a neighbourhood of $0$. Putting the remaining mass on constant strategies $\sigma_{0},\sigma_{\text{AD}}$, the discounted reproduction rate will be $D(\tilde{\mu})=D(\mu)=\overline{\delta}_1$. We then set $\tilde{\mu}_\lambda[\sigma_{\text{AD}}]$ to ensure the steady state distribution has contribution rate $\lambda$, solving 
$$
\lambda=\tilde{\mu}_\lambda[\sigma_{\text{AD}}]+\tilde\mu_\lambda[\Sigma_1']\E[b\sim F_\beta]{1-(1-\lambda)^{b}}.
$$
Taking $\lambda\rightarrow 0$, we obtain our sequence of PSE $\tilde\mu_\lambda\rightarrow \mu$.

$\boldsymbol{\beta\le 1}$: The first part is an immediate consequence of the proof of Proposition \ref{prop:mono.suff}c) when $\beta\le 1$ showing that no such partial partial participation PSE exist. The second part follows since $\Lambda_{\tilde\mu}^N$ is concave for any population strategy $\tilde\mu$ on the Finite Deviation strategy space, thus if $\Lambda_{\tilde\mu}^N(0)\ge\tilde\mu[\sigma_{\text{AD}}]>0$ then the unique fixed point $\lambda_\ast$ of $\Lambda_{\tilde\mu}^N$ will have $(\Lambda_{\tilde\mu}^N)'(\lambda_\ast)\le 1-\tilde\mu[\sigma_{\text{AD}}]$ for every $N$, implying stability, and thus $\tilde\mu\not\rightarrow \mu$ by the first part.
\end{proof}
\subsection*{Section \ref{sec:strat.gen} Proofs}
\begin{proof}[Proof of Proposition \ref{prop:monotone}]
$2\Rightarrow1$: Suppose $\mu$ is impulse equivalent to a strategy $\mu'$ in a monotone strategy space with shared impulse response $\nu$. If the agent contributes $n$ periods after observing a deviation (occurs w.p. $1-\nu_n$), then they will only observe contributions, and due to the constraints of the strategy space they will then contribute in the $(n+1)$th period after the deviation. Thus $1-\nu_{n+1}\ge 1-\nu_n$, and the strategy is monotone.

$1\Rightarrow3$: Note that a strategy $\mu\in\Delta\{0,1,\dots,\infty\}$ has impulse response $\nu_n(\mu)=\mu_\infty+\sum_{k=n}^\infty \mu_n$. For decreasing sequences $(\nu_n)$ this has the inverse function $\mu_n(\nu):=\nu_n-\nu_{n+1}$ for $1\le n<\infty$, $\mu_0(\nu):={1-\nu_1}$, and $\mu_\infty(\nu):=\lim\nu_n$.

$3\Rightarrow 2$: the space of strategies $\{0,1,\dots,\infty\}$ \textit{is} monotone.
\end{proof}
\begin{proof}[Proof of Proposition \ref{prop:mono.suff}]
    \textbf{(a):} Since $D(\sigma_\infty)\ge D(\mu)=\overline{\delta}_1$, the Grim Trigger equilibrium exists.

    \textbf{(b):} Define $\hat\mu\in\Delta\mathbb{N}_0 $ to be the Finite Deviation strategy
\begin{equation}\label{eq:concentrate}
        \hat\mu_n:=\begin{cases}
        \sum_{t=1}^\infty\nu_t(\mu)-(n-1)&n\ge  \sum_{t=1}^\infty\nu_t(\mu)\ge n-1\\
        n+1-\sum_{t=1}^\infty\nu_t(\mu)&n+1\ge  \sum_{t=1}^\infty\nu_t(\mu)> n\\
        0&\text{otherwise},
    \end{cases}
\end{equation}
    essentially moving the deviations in $\mu$ as early as possible\footnote{For Finite Deviation strategies $\hat{\mu}$ is the maximal mean-preserving contraction of $\mu$.}. This is a full-participation monotone strategy with $\Re(\hat\mu)=\Re(\mu)\le 1$ (by Lemma \ref{lem:rob}) such that $\nu(\hat\mu)$ precedes $\nu(\mu)$ in FOSD, thus $D(\hat\mu)> D(\mu)$. We can then construct a monotone PSE $\mu'$ by randomizing between 0-Deviation and $\hat\mu$, which will have reproduction rate
    $$
    (1-\mu_0)\Re(\hat\mu)=(1-\mu_0)\Re(\mu)< \Re(\mu).
    $$

    \textbf{(c):} We can use the same construction $\mu'$ in the case where $\Re(\mu)<1$ or $\mu'$ is not impulse equivalent to $\mu$. Thus it remains to check the case where $\mu$ and $\mu'$ are impulse equivalent and $\Re(\mu)=1$. In this case $\mu$ is the concentrated equilibrium. We break this edge case into three cases $\beta>1$, $\beta=1$, and $\beta<1$. 
    
    If $\boldsymbol{\beta>1}$, Proposition \ref{prop:rob.edge} shows that the concentrated equilibrium is globally robust and stable. 
        
    If $\boldsymbol{\beta=1}$ and the concentrated equilibrium $\mu'$ has $R(\mu')=1$, then $$\nu(\mu')=\nu(\mu)=(1,0,0,\dots)$$and agents copy their previous observation w.p. 1 in each PSE. This cannot occur if $\overline{b}>1$, as then there is positive probability that agents observe both $C$ and $D$ in a period (since $\mu$ is assumed to be partial participation), and therefore cannot copy both of these observations the next period. Thus agents always observe precisely one action which they copy the next period, which is not locally stable --- any fluctuation in the initial deviation rate persists indefinitely.

    If $\boldsymbol{\beta<1}$, the concentrated equilibrium $\mu$ with $R(\mu)=1$ is supported on $\{n,n+1\}$ for some $n\ge1$. But for stationary partial participation strategies $\mu$, there is at most one $n$ such that $\nu_n(\mu)=1$ --- as this implies that for a.e. sequence of observations, agents copy their observation from $n$ periods ago (should they make one), this cannot hold for multiple $n$ where any finite sequence of observations occurs with positive probability (ie. unless $\lambda=0,1$). Thus $\mu$ cannot be impulse equivalent to $\mu'$.
\end{proof}
\newpage \setcounter{page}{1}
\section*{\huge Online Appendix}
\section{Generalized Preference}\label{app:gen.mod}
Let $\lambda_t$ represent the measure of deviating agents in period $t$. Note that agents cannot affect $\lambda_t$ through their influence (as they influence only a countable measure of agents). Suppose the paths $p^C,p^D$ correspond to the same sequence of $\lambda$, we suppose agents prefer the path $p^C$ to $p^D$ when
\begin{equation}
        \sum_{\tau\ge t} \Delta_{\tau-t} (p^C_{i,\tau}-p^D_{i,\tau})\le \Delta_{\lambda(\cdot)}(p^C-p^D).
    \end{equation}
Where the function $\Delta_{\lambda(\cdot)}$ captures the difference in externality between the two paths. We might consider the benchmark (or lower bound) externality given by
\begin{equation}\label{eq:Delta}
    \underline{\Delta}_{\lambda(\cdot)}(p):= \sum_{(j,\tau)\in\mathcal{C}^+_{i,t}} \underline{\delta}^{\tau-t}\underline{\alpha}_{\lambda_\tau}p_{j,\tau},
\end{equation}
where $\lambda\mapsto \underline{\alpha}_{\lambda}$ is continuous and positive\footnote{Changing how $\alpha$ varies with $\lambda$ can affect our edge case analysis in Proposition \ref{prop:rob.edge}. Moreover Proposition \ref{prop:mono.suff}c) may no longer hold if $\alpha$ is an increasing function of the deviation rate $\lambda$ (ie. contributing is submodular, not complementary), in which case increasing the deviation rate $\lambda$ effectively decreases $\overline{\delta}_1$, potentially allowing new equilibria to exist at lower participation rates that may not exist at higher rates.}, and $\mathcal{C}^+_{i,t}$ is the `cone' of actions that can be influenced by agent $i$'s action at time $t$ (formally defined in Appendix \ref{app:obs.ind}). The body of the paper uses this externality with $\underline{\alpha}_{\tau}\equiv \alpha$ constant. 

The key properties necessary for our result are that $\Delta_{\lambda(\cdot)}$ is present biased, continuous, and bounded below by an expression like eq. \ref{eq:Delta}. This describes a wide range of preferences, to demonstrate the range covered by these conditions, consider the following example:

\begin{ex}[Local Externalities on the Plane]
    Agents are indexed by $i\in\mathbb{I}:=\mathbb{R}^2$, each period they face a decision to use a clean technology $C$ or a polluting technology $D$. The polluting technology is more cost-efficient by an amount $c>0$, but causes an externality on agents that decreases with the distance between agents. Specifically, the externality imposed on agents a distance $r$ away is given by $f(r)$ where $f:\mathbb{R}_+\rightarrow \mathbb{R}_{++}$ is a decreasing function bounded: $c> f(r)\ge a_0\exp(-kr^2)$ for some $a_0,k\in\mathbb{R}_{++}$.
    
    The number of agents that observe an action is drawn from $\text{Poisson}(\beta)$, and the location of these agents is independently drawn from a normal distribution centred on $i$: $N(i,\sigma_o^2)$.

    The agent $i$ at time $t$ prefers the path $p^C$ to $p^D$ when
    \begin{equation}\label{eq:pollute}
        \sum_{t\in\mathbb{T}} \delta^{t} \Big[-c(p^C_{i,t}-p^D_{i,t})+\sum_{j\in\mathbb{I}} f(\|j-i\|)(p^C_{j,t}-p^D_{j,t})\Big]\ge 0.
\end{equation}
The location of an agent $D$ `degrees of separation' away can then be calculated through convolutions of normal distributions to be distributed according $N(0,D\sigma_0^2)$. and thus the expected externality from such an agent is bounded below by
$$
\Econd{f(r)}{D}\ge \int a_0 e^{-kr^2}e^{-r^2/(2D\sigma_0^2)}\frac{2\pi r\,dr}{2\pi D\sigma_o^2}
=\frac{a_0}{2kD^2\sigma_o^4+1}.
$$
Since degree of separation of an agent affected by agent $i$'s action $n$ periods later is at most $n$, the coefficient on agent $j$'s defection $n$ periods later is bounded (as a function of $n$)
$$
\tfrac{a_0\delta^n}{2kn^2\sigma_o^4+1}\ge a\underline{\delta}^n
$$
for some choice of $a>0$ for every $\underline{\delta}<\delta$. Comparing with $\underline{\Delta}$ defined in eq. \ref{eq:Delta} (and normalizing $c=1$), we see that it acts as a lower bound on the externality in this model when we define in this example is bounded below  a lower bound is obtained with $\underline{\alpha}=\frac{a}{c}$.
\end{ex}

Our results in this paper rely following four properties satisfied by the above example
\begin{enumerate}
    \item \textbf{Quasi-discounting}: there exist $\underline{\delta}>0,\underline{\alpha}:[0,1]\rightarrow\mathbb{R}_{++}$ such that eq. \ref{eq:Delta} is a lower bound on $\Delta_{\lambda(\cdot)}$.
    \item \textbf{Necessity of influence}: if $p^C_{j,\tau}=p^D_{j,\tau}$ for all $j\neq i$ and $\tau$ then eq. \ref{eq:pref} fails for the constant path $\lambda(\cdot)\equiv 0$.
\end{enumerate}
This means that if agents knew that they had no influence on others --- perhaps their action is private --- then deviating from the norm would be a dominant action. This implies $\underline{\alpha}_0<1$.

Our last properties requires the degree of separation $d_\mathcal{C}:\mathbb{I}^2\rightarrow\mathbb{Z}_+$  between two agents in the same cone of influence $\mathcal{C}$ that captures the number of observations that separate two agents:
\begin{equation*}
    d_\mathcal{C}(j,i):=\begin{cases}
        0&j =i\\
        1+\min_{i'}\{d_\mathcal{C}(i',i); j\in O(i',\tau)\text{ for some $\tau$}\}& j\neq i
    \end{cases}
\end{equation*}
We say that a path $p\in\mathcal{P}_{i,t}$ \textbf{first order dominates} another path $p'$, denoted $p\trianglerighteq p'$, if at any distance $m$, it contains on average more deviations before any given time $T$, formally:
\begin{equation}
    \label{eq:present-bias}
    \sum_{\substack{d_\mathcal{C}(j,i)=m,\\ \tau<T}} (p_{j,\tau}-p'_{j,\tau})\ge 0\qquad\text{for all }m,T
\end{equation}
If $p \trianglerighteq p'$ then it can be obtained from $p'$ by a combination of moving deviations earlier in time (at the same radius $m$) and increasing the likelihood of deviating for various agents $j$ in periods $\tau$. It is natural that these transformations should increase the size of the externality to an agent:
\begin{enumerate}\setcounter{enumi}{2}
    \item \textbf{Monotonicity}/\textbf{Present-bias}: $p\mapsto \Delta_0(p)$ is $\trianglerighteq$-increasing.
\end{enumerate}
The last property concerns the continuity of the function $\Delta$ over the space of paths. There are two notions of continuity that we will use.

Concerning the existence of equilibria in Section \ref{sec:exist}, it suffices to have continuity of $p\mapsto \Delta_0(p)$ along one-dimensional lines $\mu_0\mapsto p^{\mu_0}$ through $\mathcal{P}$ given by
    \begin{equation*}
        p_{j,\tau}^{\mu_0}:={\mu_0}^{d_\mathcal{C}(j,i)}p_{j,\tau}.
    \end{equation*}
    This transformation corresponds to agents playing a constant strategy with probability ${\mu_0}$ and the mixed strategy corresponding to $p$ with probability $1-{\mu_0}$ (each agent in effect `blocks' agent $i$'s influence w.p. ${\mu_0}$). This moderates the influence an agent has on those further away in their causal cone.
    
When considering robust equilibria in Section \ref{sec:robust}, we constrain attention to a normed subspace $\ell^1_{\delta_\ast}\subseteq\mathcal{P}$, where $\delta_\ast\in]0,1[$ is a specific parameter. This space is defined
\begin{align*}
    \|p\|_{\delta_\ast}:=&\sum_{j,\tau}{\delta_\ast}^{d_\mathcal{C}(j,i)}|p_{j,\tau}|&
    \ell^1_{\delta_\ast}:=&\big\{p\in\mathcal{P};\|p\|_{\delta_\ast}<\infty\big\}.
\end{align*}
This norm discounts the actions of agents that are further away at rate $\delta_\ast$.
    
Formally, our two continuity conditions are:
\begin{enumerate}\setcounter{enumi}{3}
    \item \textbf{Regularity}: 
    \begin{enumerate}
        \item for all $p\in\mathcal{P}$, ${\mu_0}\mapsto \Delta_0(p^{\mu_0})$ is continuous
        \item for some $\delta_\ast<1$, $(\lambda,p)\mapsto \Delta_\lambda(p)$ (where $\lambda$ indicates a path with constant cooperation rate $\lambda$) and $\lambda\mapsto \alpha_\lambda$ are continuous
    \end{enumerate}
\end{enumerate}
Note that $\mu_0\mapsto p^{\mu_0}$ is a continuous map under $\|\cdot\|_{\delta_\ast}$ when $p^\mu\in \ell^1_{\delta_\ast}$; the added value of (4a) is to extend continuity to $p\not\in\ell^1_{\delta_\ast}$.
\subsection{Proofs of Preference Properties}
\begin{lem}\label{lem:exist}
    If a reactivity distribution $\mu^+$ corresponds to a full participation PSE when the externality is given by eq. \ref{eq:Delta}, then it also corresponds to a PSE under assumptions (1)-(4a).
\end{lem}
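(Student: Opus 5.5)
Fix a reactivity distribution $\mu^+$ that yields a full participation PSE under the benchmark externality $\underline{\Delta}$ of eq.~\ref{eq:Delta}. Write $\mu^{\mu_0}:=\mu_0\sigma_0\oplus(1-\mu_0)\mu^+$. The plan is an intermediate value argument in the mixing weight $\mu_0$ on Always Contribute. As in Theorem~\ref{thm:GT}, $\mu^{\mu_0}$ is a PSE exactly when agents are indifferent between $C$ and $D$ in every state. Under the general preference this means the direct private cost of contributing, normalized to one, equals the externality $\Delta_0(p^C-p^D)$ generated when everyone else plays $\mu^{\mu_0}$. Observation independence means no history changes this comparison, so it suffices to find a $\mu_0$ at which this single equality holds.

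First I will compute the path difference induced by a deviation under $\mu^{\mu_0}$. Each agent in the cone $\mathcal{C}^+_{i,t}$ at degree of separation $m$ is reached along a unique chain (observation independence). Along that chain every intermediate agent independently blocks the influence with probability $\mu_0$. Hence $p^C-p^D$ under $\mu^{\mu_0}$ equals $p^{\mu_0}$, the rescaling of the difference $p$ under pure $\mu^+$ by $\mu_0^{d_\mathcal{C}(j,i)}$. Regularity (4a) then makes $\phi(\mu_0):=\Delta_0(p^{\mu_0})$ continuous on $[0,1]$.

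Next I will check the two endpoints.
\begin{itemize}
\item At $\mu_0=1$ nobody is influenced, so only agent $i$'s own action differs. By necessity of influence (property 2), contributing is then strictly worse: $\phi(1)$ falls short of the private cost.
\item At $\mu_0=0$ I need $\phi(0)\ge 1$. By hypothesis some $\mu_0^\ast$ satisfies $\underline{\Delta}(p^{\mu_0^\ast})=1$, i.e.\ $D(\mu^{\mu_0^\ast})=\overline{\delta}_1$ with the constants $\underline\alpha,\underline\delta$. Quasi-discounting (property 1) gives $\Delta_0(p^{\mu_0^\ast})\ge\underline{\Delta}(p^{\mu_0^\ast})$.
\item To move from $\mu_0^\ast$ to $0$ I use present-bias (property 3). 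Lowering $\mu_0$ increases the deviation probability of every cone agent pointwise, so $p^{0}\trianglerighteq p^{\mu_0^\ast}$. Therefore $\phi(0)\ge\phi(\mu_0^\ast)\ge 1$.
\end{itemize}
Continuity then yields some $\mu_0\in[0,\mu_0^\ast]$ with $\phi(\mu_0)=1$, which is the required PSE with reactivity distribution $\mu^+$.

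The main obstacle is the bookkeeping in the endpoint step. First, $\Delta_0$ must be evaluated at the constant path $\lambda\equiv0$, which is legitimate because full participation keeps $\lambda_t=0$. Second, I must check that the linearized benchmark comparison really uses the same path object $p^{\mu_0^\ast}$, including the deviating states' delayed return. Third, $p^0$ may lie outside $\ell^1_{\delta_\ast}$, so $\phi(0)$ could be $+\infty$. That is harmless here, because only the lower bound and (4a) continuity are used. If needed, I would work directly at $\mu_0^\ast$ instead and apply the intermediate value theorem on $[\mu_0^\ast,1]$.
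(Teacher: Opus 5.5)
Your proposal is correct and follows the paper's proof: an intermediate value argument in the weight $\mu_0$ on Always Contribute, with (2) giving the endpoint $\mu_0=1$, quasi-discounting (1) giving the other endpoint, and (4a) giving continuity. The paper applies (1) directly at pure $\mu^+$, so your detour through $\mu_0^\ast$ and present-bias (3) is harmless but not needed; there are also two small slips. Since $\phi(\mu_0^\ast)\ge 1>\phi(1)$, the crossing lies in $[\mu_0^\ast,1)$, not $[0,\mu_0^\ast]$. And the blocking interpretation you describe gives the scaling factor $(1-\mu_0)^{d_\mathcal{C}(j,i)}$ rather than $\mu_0^{d_\mathcal{C}(j,i)}$, a typo inherited from the paper's definition of $p^{\mu_0}$.
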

\begin{proof}
    By assumption (1), we have that the externality under the strategy $\mu^+$ is greater than $1$. By assumption (2), we know that under the pure strategy of 0-Deviation the externality is less than $1$. By assumption (4a), the externality is continuous as we increase the weight $\mu_0$ placed on 0-Deviation strategies. Intermediate value theorem then guarantees that there exists a weight $\mu_0$ with which agents can mix between $\mu_0$ and $\mu^+$ to obtain equilibrium.
\end{proof}
\begin{lem}\label{lem:SOSD}
    Let $\tilde{\mu},\mu\in\Delta\Sigma$ be two strategies, to verify if their paths after observing a deviation satisfy $p'(\tilde{\mu})\trianglerighteq p'(\mu)$ it suffices to check eq. \ref{eq:present-bias} for $m=1$.
\end{lem}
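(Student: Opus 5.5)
We need to prove Lemma \ref{lem:SOSD}: to verify that the post-deviation path of $\tilde\mu$ first-order dominates that of $\mu$, i.e.\ eq.~\ref{eq:present-bias} holds for all $m$ and $T$, it suffices to check the inequality at distance $m=1$.

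The plan is to express the expected number of deviations at distance $m$ before time $T$ as an iterated convolution of the distance-one profile, and then to show that the dominance ordering is preserved under such convolutions. The structural fact that makes this work is Observation Independence (Appendix \ref{app:obs.ind}). Each agent at distance $m$ in the cone $\mathcal{C}^+_{i,t}$ is reached by a unique chain $i=i_0,i_1,\dots,i_m$. Moreover, in a full-participation stationary environment the behaviour of $i_{k+1}$ depends only on whether $i_k$ deviated; the other observations of $i_{k+1}$ are contributions.

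\textbf{Step 1: the distance-one profile.} Let $f_\mu(s)$ be the expected number of deviations at distance one, $s$ periods after the original action. Let $F_\mu(T)=\sum_{s<T}f_\mu(s)$ denote its partial sums. Eq.~\ref{eq:present-bias} at $m=1$ is exactly $F_{\tilde\mu}(T)\ge F_\mu(T)$ for all $T$.

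\textbf{Step 2: the recursion across distances.} An agent at distance $m+1$ deviating at time $\tau$ must be an observer of some distance-$m$ deviation at an earlier time $\tau'$. That deviation is a fresh deviation observed in a full-participation environment, so by stationarity and linearity it generates the profile $f_\mu(\tau-\tau')$ at the next distance. This gives $g^{(m+1)}_\mu=g^{(m)}_\mu * f_\mu$, and hence $g^{(m)}_\mu=f_\mu^{*m}$.

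Here I would be careful that $f$ counts deviations (deviation events, not agents), so that a single observer deviating over several periods is counted once per period. This matches eq.~\ref{eq:present-bias}, which sums $p_{j,\tau}$ over both $j$ and $\tau$. Each such deviation period is itself observed and propagates.

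\textbf{Step 3: dominance is preserved under convolution.} The key lemma is: if $f,g\ge0$ and $\tilde f,\tilde g\ge 0$ satisfy $\tilde F\ge F$ and $\tilde G\ge G$ pointwise (partial sums), then the partial sums of $\tilde f*\tilde g$ dominate those of $f*g$. Write
\[\sum_{\tau<T}(f*g)(\tau)=\sum_{s}f(s)\,G(T-s).\]
Since $G(T-\cdot)$ is nonincreasing in $s$ and nonnegative, Abel summation gives $\sum_s f(s)G(T-s)\le \sum_s \tilde f(s)G(T-s)$, using $\tilde F\ge F$. Then $G\le\tilde G$ finishes the argument. Induction on $m$ then yields the inequality for all $m$ and $T$.

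\textbf{Main obstacle.} The delicate step is Step 2: justifying that the distance-$(m+1)$ profile depends only on the timing of distance-$m$ deviations through the fixed kernel $f_\mu$. For general (non-Finite-Deviation) states, an agent who has already deviated may be in a different state when observing later events. However, distance-$(m+1)$ agents are fresh observers whose only non-contribution observation is the single distance-$m$ deviation, by Observation Independence. So each is in a stationary draw from $\mu$ with only contributions observed before, and the kernel is indeed fixed.

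I would also check the infinite-sum and Abel-summation step when the partial sums diverge, as with Grim Trigger. In that case I would truncate at $T$, where every sum is finite.
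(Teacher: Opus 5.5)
Your proposal is correct and follows the paper's route: the paper also writes the distance-$(m+1)$ deviation counts as the convolution of the distance-$m$ counts with the distance-one profile, and then concludes by induction on $m$. Your Abel-summation step (convolution with nonnegative kernels preserves the ordering of partial sums) and your Observation-Independence justification of a fixed kernel spell out what the paper compresses into the word ``inductively''; your normalization $g^{(m+1)}=g^{(m)}*f$ is also the consistent one, whereas the paper's displayed recursion carries an extra factor $\beta$ already contained in $F_1$, which does not affect the ordering.
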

\begin{proof}
    Let
    $$
    F_m^{\mu}(T):=\sum_{\substack{d_\mathcal{C}(j,i)=m,\\ \tau<T}} p'_{j,\tau}(\mu)
    $$ 
    be the expected number of deviations an agent a distance $m$ from $(i,0)$ will make before time $T+m$, where $T\ge 1$ (note that it take at least $m+1$ periods for these agents to enter the causal cone $\mathcal{C}_{i,0}^+$).

    Note that these functions can be decomposed: the expected number of deviations at a distance of $m+1$ and time $T$ is the expected number of deviations at a distance $m$ and time $T-\tau$ times the expected number of deviations at distance 1 from a deviating agent $\tau$ periods after the deviation:
    \begin{equation*}
        \begin{split}
            F_{m+1}^\mu(T)=&\sum_{\tau=1}^{T-1} F_1^{\mu}(T-\tau)\, \beta (F_m^{\mu}(\tau)-F_m^{\mu}(\tau-1))
        \end{split}
    \end{equation*}
    or equivalently, using convolution notation,
    \begin{equation*}
        \begin{split}
            dF_{m+1}^{\mu}=&\beta dF_1^\mu\ast dF_m^\mu.
        \end{split}
    \end{equation*}
    Thus if $F_1^{\mu}\ge F_1^{\tilde\mu}\ge 0$, then inductively $F_m^{\mu}\ge F_m^{\tilde\mu}\ge 0$.
\end{proof}
We describe some results for Finite Deviation strategies, as well as for more general strategies.
\begin{cor}\label{cor:SOSD}
    For general strategies, $\mu,\tilde{\mu}\in\Delta\Sigma$, $p'(\mu)\trianglerighteq p'(\tilde\mu)$ iff their corresponding impulse responses satisfy $\sum_{\tau\le T}\nu_\tau(\mu)\ge \sum_{\tau\le T}\nu_\tau(\tilde\mu)$ for all $T$.
    
    For Finite Deviation strategies $\mu,\tilde\mu\in\Delta\mathbb{N}_0$, $p'(\mu)\trianglerighteq p'(\tilde\mu)$ if $\mu$ is SOSD larger than $\tilde{\mu}$.
\end{cor}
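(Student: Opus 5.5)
The plan is to prove the two parts of Corollary \ref{cor:SOSD} separately, in both cases using Lemma \ref{lem:SOSD}. That lemma reduces the dominance $p'(\mu)\trianglerighteq p'(\tilde\mu)$ to checking eq. \ref{eq:present-bias} only at distance $m=1$. So we only need to compare $F_1^\mu(T)$ and $F_1^{\tilde\mu}(T)$, the expected number of deviations among direct observers of the original deviation before time $T$.

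\textbf{General strategies.} First we compute $F_1^\mu$ in terms of the impulse response. An agent observing $(i,0)$'s action is at distance $1$. The difference between the paths after $i$ contributes versus deviates, restricted to that observer's action $\tau$ periods later, is by definition $\nu_\tau(\mu)$. On average there are $\beta$ such observers, and by observation independence their reactions are unaffected by other correlations. Hence
$$F_1^\mu(T)=\beta\sum_{\tau\le T}\nu_\tau(\mu),$$
up to the indexing convention $\tau<T$ versus $\tau\le T$ that we will align. With this identity, the condition $F_1^\mu(T)\ge F_1^{\tilde\mu}(T)$ for all $T$ is exactly the stated partial-sum inequality.

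The \emph{if} direction then follows from Lemma \ref{lem:SOSD}. The \emph{only if} direction is immediate, since eq. \ref{eq:present-bias} at $m=1$ is part of the definition of $\trianglerighteq$.

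One subtlety is that Lemma \ref{lem:SOSD}'s induction uses $F_1^{\tilde\mu}\ge 0$, i.e.\ positive partial sums. We will note that the convolution recursion $dF_{m+1}=\beta\,dF_1\ast dF_m$ preserves the ordering whenever the increments $dF_m$ are nonnegative. That holds for the full-participation (positive) strategies under consideration; otherwise we restrict the claim accordingly.

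\textbf{Finite Deviation strategies.} Here $\nu_\tau(\mu)=\Pr_{n\sim\mu}[n\ge\tau]$, so
$$\sum_{\tau\le T}\nu_\tau(\mu)=\E[n\sim\mu]{\min(n,T)}.$$
For each $T$, the map $n\mapsto\min(n,T)$ is increasing and concave. Therefore SOSD-dominance of $\mu$ over $\tilde\mu$ gives the partial-sum inequality for every $T$, and the first part concludes.

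The main obstacle is the careful identification of $F_1$ with the impulse-response partial sums, including the timing offset and the randomness of $\beta$. The rest is bookkeeping.
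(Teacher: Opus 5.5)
Your proposal is correct and follows essentially the same route as the paper: Lemma \ref{lem:SOSD} reduces $\trianglerighteq$ to the $m=1$ comparison, the identity $F_1^\mu(T)=\beta\sum_{\tau\le T}\nu_\tau(\mu)$ gives the first part (the converse is immediate from the definition), and for Finite Deviation strategies this partial sum is $\beta$ times the $\mu$-expectation of the increasing concave function $n\mapsto\min\{n,T\}$. Two of your details improve on the paper: $\min\{n,T\}$ is the right function (the paper's display writes $\max\{n,T\}$, which is convex and must be a typo), and your caveat that the induction in Lemma \ref{lem:SOSD} needs nonnegative increments $\nu_\tau\ge 0$, not merely $F_1\ge 0$, is a correct sharpening that holds for the positive strategies at issue.
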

This allows us to extend Prop. \ref{prop:exist}, Corollary \ref{cor:SOSD}, and ensuring that the concentrated equilibrium is the PSE that minimizes the reproduction rate.
\begin{proof}
    For general strategies $\mu\in\Delta \Sigma$
    $$
    F_1^\mu(T)=\beta\sum_{\tau\le T}\nu_\tau(\mu).
    $$
    For Finite Deviation strategies $\mu\in\Delta\mathbb{N}_0$, the corresponding quantity is an expectation of an increasing and concave function of $n$:
    \begin{equation*}
        F_1^\mu(T)=\beta\E[n\sim \mu]{\max\{n, T\}}.\qedhere
    \end{equation*}
\end{proof}
\begin{lem}[Continuity]\label{lem:cont}
    Suppose $\beta\|\nu(\mu)\|_1<1$, and $\nu(\hat\mu)\rightarrow \nu(\mu)$ in $\ell^1$, then $\Delta_\lambda(p(\hat\mu))\rightarrow \Delta_\lambda(p(\mu))$ under assumption (4b).
\end{lem}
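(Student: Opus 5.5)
We want continuity of the externality $\Delta_\lambda(p(\cdot))$ along a sequence $\hat\mu$ whose impulse responses converge in $\ell^1$. By (4b), $(\lambda,p)\mapsto\Delta_\lambda(p)$ is continuous in the $\|\cdot\|_{\delta_\ast}$ norm on $\ell^1_{\delta_\ast}$. The plan is therefore to show that $p(\hat\mu)\to p(\mu)$ in $\|\cdot\|_{\delta_\ast}$, with both paths lying in $\ell^1_{\delta_\ast}$. Here $p(\cdot)$ is the difference path induced by a single deviation of agent $i$.

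\textbf{Step 1: express the path layer by layer.} I organise $p$ by degree of separation $m$, exactly as in the proof of Lemma \ref{lem:SOSD}. By Observation Independence, each agent at distance $m$ is reached through a unique chain. Hence the expected number of extra deviations at distance $m$, as a function of time, is the $m$-fold convolution $\beta^m\,\nu^{\ast m}$, with $\nu=\nu(\mu)$ placed on $\{1,2,\dots\}$. Taking absolute values, this gives
\[
\|p(\mu)\|_{\delta_\ast}\le\sum_{m\ge1}\delta_\ast^{m}\beta^m\|\nu^{\ast m}\|_1\le\sum_{m\ge1}\big(\delta_\ast\beta\|\nu\|_1\big)^m<\infty ,
\]
using Young's inequality $\|f\ast g\|_1\le\|f\|_1\|g\|_1$ and the hypothesis $\beta\|\nu\|_1<1$. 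In fact the bound holds even with $\delta_\ast$ replaced by $1$. So $p(\mu)\in\ell^1_{\delta_\ast}$. A subtlety is that $|p_{j,\tau}|$ need not equal the convolution of absolute values. Since we only need upper bounds, I bound it by the convolution of $|\nu|$.

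\textbf{Step 2: control the difference.} Since $\|\nu(\hat\mu)\|_1\to\|\nu(\mu)\|_1$, eventually $\beta\|\hat\nu\|_1\le r<1$ for a fixed $r$. Telescoping then gives
\[
\hat\nu^{\ast m}-\nu^{\ast m}=\sum_{k=0}^{m-1}\hat\nu^{\ast k}\ast(\hat\nu-\nu)\ast\nu^{\ast(m-1-k)} .
\]
Applying Young's inequality to each term, the distance-$m$ layer of $p(\hat\mu)-p(\mu)$ has norm at most
\[
m\,\beta\, r^{m-1}\,\|\hat\nu-\nu\|_1 .
\]
Summing over $m$ with weight $\delta_\ast^m\le1$ yields
\[
\|p(\hat\mu)-p(\mu)\|_{\delta_\ast}\le \frac{\beta}{(1-r)^2}\,\|\hat\nu-\nu\|_1\to0 .
\]
The constant $\lambda$ is held fixed throughout, so (4b) gives $\Delta_\lambda(p(\hat\mu))\to\Delta_\lambda(p(\mu))$.

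\textbf{Main obstacle.} The hard part is justifying that each layer of the path really is determined by the convolution of impulse responses. This requires the influence of a deviation to pass to further agents linearly through $\nu$, including in partial-participation settings where $\nu$ is defined relative to a deviation rate $\lambda$. There I would use observation independence together with the definition of $\nu_n$ as an expected marginal effect. Under observation independence, each downstream agent's extra deviation is a single marginal perturbation of a stationary background, so its expectation is the product of marginal effects along the unique chain. If this linearisation only holds approximately, I would work to first order in the perturbation. I would then note that the $\|\cdot\|_{\delta_\ast}$ bounds above still dominate, because every term is bounded by the chain product of $|\nu|$.
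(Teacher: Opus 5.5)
Your proposal is correct and follows essentially the paper's argument. The paper also writes the distance-$m$ layer as the convolution $f_{m+1}=f_1\ast f_m$ with $f_1=\beta\nu$, bounds the layer differences inductively by $(m+1)\|f_1^\mu-f_1^{\hat\mu}\|_1$ (your telescoping identity is the unrolled form of this induction), and sums against $\delta_\ast^m$. The only differences are that you keep the geometric factor $r^{m-1}$, so summability does not need $\delta_\ast<1$ (the paper instead uses only $\|f_m\|_1<1$ and $\sum_m\delta_\ast^m(m+1)<\infty$), and that you derive the eventual bound $\beta\|\nu(\hat\mu)\|_1<1$ from $\ell^1$ convergence, which the paper simply assumes.
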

This continuity property is useful for extending Lemma \ref{lem:rob} for bounded memory strategies.
\begin{proof}
Let $f^\mu_m(t)$ be the expected number of deviating agents at distance $m$ from the deviating agent $t+m$ periods after the deviation. These probability mass functions have the convolution propery 
\begin{equation*}
    f_{m+1}^\mu=f_1^\mu\ast f_m^\mu.
\end{equation*}
Where $f_1^\mu(t)=\beta\nu_t(\mu)$. By induction we have $\|f_m^\mu\|_\infty\le \|f_m^\mu\|_1<1$ when $\beta\|\nu(\mu)\|_1<1$.

Suppose, then that $\beta\|\nu(\hat\mu)\|_1<1$, then inductively
\begin{equation*}\begin{split}
    \big\|f_{m+1}^\mu-f_{m+1}^{\hat\mu}\big\|_1\le& \big\|f_1^\mu\big\|_\infty\big\|f_m^{\mu}-f_m^{\hat\mu}\big\|_1+\big\|f_m^\mu\big\|_\infty\big\|f_1^\mu-f_1^{\hat\mu}\big\|_1\\
    \le& \big\|f_m^\mu-f_m^{\hat\mu}\big\|_1+\big\|f_1^\mu-f_1^{\hat\mu}\big\|_1\le (m+1)\big\|f_1^\mu-f_1^{\hat\mu}\big\|_1
    \end{split}
\end{equation*}
We can then bound the distance between the paths $p(\hat\mu),p(\mu)$:
\begin{equation*}
    \|p(\hat\mu)-p(\mu)\|_{\delta_\ast}=\sum_{m=1}^\infty\delta_\ast^m\big\|f_m^\mu-f_m^{\hat\mu}\big\|_1\le \beta\|\nu(\mu)-\nu(\hat\mu)\|_1\sum_{m=1}^\infty\delta_\ast^m(m+1)
\end{equation*}
where the last summation is finite by, e.g., ratio test.
\end{proof}
\end{document}